\title{Accelerating Dynamic Graph Analytics on GPUs}	
\author{M. Sha et al.}
\date{July 15, 2017}
\let\thetitle\@title
\let\theauthor\@author
\let\thedate\@date
\begin{document}
\pagestyle{empty}

\begin{titlepage}
	\centering
    \vspace*{0.5 cm}
    \includegraphics[scale = 0.75]{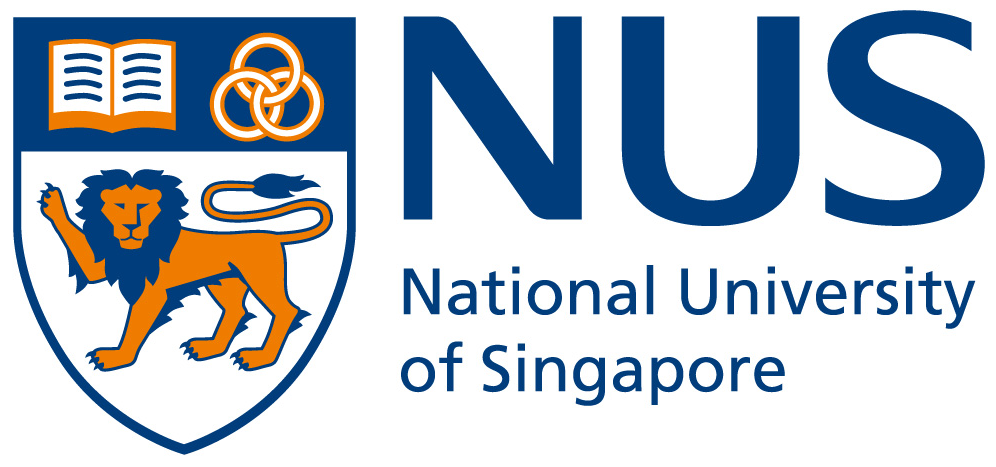}\\[1.0 cm]	

	\rule{\linewidth}{0.2 mm} \\[0.4 cm]
	{ \Huge \bfseries \thetitle}\\
	\rule{\linewidth}{0.2 mm} \\[1.5 cm]

	\textbf{\LARGE Technical Report} \\
	\vspace{2mm}
	\textbf{\large Version 2.1} \\
	\vspace{1cm}
	{\Large Mo Sha, Yuchen Li, Bingsheng He and Kian-Lee Tan} \\
	
	\vspace{9cm}
	{\large \thedate}\\[2 cm]
 
	\vfill
	
\end{titlepage}


\tableofcontents
\pagebreak

\pagestyle{fancy}

\newtheorem{definition}{Definition}
\newtheorem{lemma}{Lemma}
\newtheorem{example}{Example}
\newtheorem{theorem}{Theorem}

\newcommand{\system}{\texttt{SAG\textsuperscript{2}E}\xspace}
\newcommand{\pma}{\texttt{PMA}\xspace}
\newcommand{\gpma}{\texttt{GPMA}\xspace}
\newcommand{\gpmaplus}{\texttt{GPMA+}\xspace}
\newcommand{\stinger}{\texttt{Stinger}\xspace}
\newcommand{\adjlists}{\texttt{AdjLists}\xspace}
\newcommand{\gpucsr}{\texttt{cuSparseCSR}\xspace}

\newcommand{\todo}[1]{\textcolor{blue}{\textbf{?-}{#1}\textbf{-?}}}

\newcommand{\dsreddit}{{\tt Reddit}\xspace}
\newcommand{\dspokec}{{\tt Pokec}\xspace}
\newcommand{\dsrandom}{{\tt Random}\xspace}
\newcommand{\dsgraphlib}{{\tt Graph500}\xspace}
\newcommand{\appagerank}{{\tt PageRank}\xspace}
\newcommand{\apcc}{{\tt ConnectedComponent}\xspace}
\newcommand{\apbfs}{{\tt BFS}\xspace}
\newcommand{\apspanner}{{\tt Spanner}\xspace}
\newcommand{\aptcm}{{\tt TCM}\xspace}
\newcommand{\apagg}{{\tt SubGAgg}\xspace}
\newcommand{\aphybrid}{{\tt TCM+SubGAgg}\xspace}

\newcommand{\spmv}{{\tt SpMV}\xspace}

\newcommand{\vset}{V}
\newcommand{\eset}{E}

\algdef{SE}[SUBALG]{Indent}{EndIndent}{}{\algorithmicend\ }
\algtext*{Indent}
\algtext*{EndIndent}

\algdef{SE}[DOWHILE]{Do}{doWhile}{\algorithmicdo}[1]{\algorithmicwhile\ #1}

\newcommand{\highlight}[1]{\emph{\textbf{\uuline{#1}}}}

\newcommand{\marked}[1]{{\color{black} #1}}
\setcounter{page}{1}
\setcounter{figure}{0}
\begin{abstract}
As graph analytics often involves compute-intensive operations,
GPUs have been extensively used to accelerate the processing.
However, in many applications
such as social networks, cyber security, and fraud detection, their representative graphs evolve frequently and
one has to perform a rebuild of the graph structure on GPUs to incorporate the updates.
Hence, rebuilding the graphs becomes the bottleneck of processing high-speed graph streams.
In this paper, 
we propose a GPU-based dynamic graph storage scheme to support existing graph algorithms easily.
Furthermore,
we propose parallel update algorithms to support efficient stream updates
so that the maintained graph is immediately available for high-speed analytic processing on GPUs.
Our extensive experiments with three streaming applications
on large-scale real and synthetic datasets demonstrate the superior performance of our proposed approach.
\end{abstract} 
\section{Introduction}\label{sec:intro}
Due to the rising complexity of data generated in the big data era,
graph representations are used ubiquitously. 
Massive graph processing has emerged as the de facto standard 
of analytics on web graphs, social networks (e.g., Facebook and Twitter), 
sensor networks (e.g., Internet of Things)
and many other application domains which involve high-dimen-sional data (e.g., recommendation systems). 
These graphs are often highly dynamic: network traffic data averages $10^9$ packets/hour/router
for large ISPs \cite{Guha:2012:GSS}; Twitter has $500$ million tweets per day \cite{TweetStats}.
Since real-time analytics is fast becoming the norm \cite{Iyer:2016:TGP,Braun:2015:AMH,McGregor:2014:GSA,stonebraker20058},
it is thus critical for operations on dynamic massive graphs to be processed efficiently.  

Dynamic graph analytics has a wide range of applications. 
Twitter can recommend information based on the
up-to-date TunkRank (similar to PageRank) computed
based on a dynamic attention graph \cite{Cheng:2012:KTP} and cellular
network operators can fix traffic hotspots in their
networks as they are detected~\cite{Iyer:2015:RTC}.
To achieve real-time performance, there is a growing
interest to offload the graph analytics to GPUs due to
its much stronger arithmetical power and higher memory bandwidth compared with CPUs~\cite{stratton2012optimization}.
Although existing solutions, e.g. Medusa \cite{Zhong:2014:SGP} and Gunrock \cite{Wang:2015:GHG}, have
explored GPU graph processing,
we are aware that the only one work~\cite{King:2016:DSM}
has considered a dynamic graph scenario which
is a major gap for running analytics on GPUs. 
In fact, a delay in updating a dynamic graph may lead to undesirable consequences.
For instance, consider an online travel insurance system that detects potential frauds by running ring analysis on profile graphs
built from active insurance contracts \cite{Akoglu:2015:GBA}.
Analytics on an outdated profile graph may fail to detect frauds which can cost millions of dollars.
However, updating the graph will be too slow for issuing contracts and processing claims in real time,
which will severely influence legitimate customers' user experience.
This motivates us to develop an update-efficient graph structure on GPUs to support dynamic graph analytics.
%

There are two major concerns
when designing a GPU-based dynamic graph storage scheme.
First, the proposed storage scheme should handle both insertion and deletion operations efficiently.
Though processing updates against insertion-only graph stream could be handled by reserving extra spaces to accommodate the updates,
this na\"{\i}ve approach fails to preserve the locality of the graph entries and cannot support deletions efficiently. 
Considering a common sliding window model on a graph edge stream,
each element in the stream is an edge in a graph and analytic tasks are performed on the graph induced by all edges in the up-to-date window \cite{ywang, crouch2013dynamic, datar2002maintaining}.
A na\"{\i}ve approach needs to access the entire graph in the sliding window to process deletions.
This is obviously undesirable against high-speed streams. 
Second, the proposed storage scheme should be general enough for supporting existing graph formats on GPUs
so that we can easily reuse existing static GPU graph processing solutions for graph analytics.
Most large graphs are inherently \emph{sparse}. 
To maximize the efficiency, 
existing works \cite{Ashari:2014:FSM,Liu:2016:ICB,Lin:2014:ESM,King:2016:DSM,Yang:2011:FSM} on GPU sparse graph processing
rely on optimized data formats and arrange
the graph entries in certain sorted order, e.g. CSR \cite{Liu:2016:ICB,Ashari:2014:FSM} sorts the entries by their row-column ids.
However, to the best of our knowledge, no schemes on GPUs can support efficient updates and maintain a sorted graph format
at the same time, other than a rebuild. 
This motivates us to design an update-efficient sparse graph storage scheme on GPUs
while keeping the locality of the graph entries for processing massive analytics instantly. 


In this paper, we introduce a GPU-based dynamic graph analytic framework followed by proposing the dynamic graph storage scheme on GPUs.
Our preliminary study shows that a cache-oblivious data structure, i.e., Packed Memory Array (\pma \cite{Bender:2005:COB,Bender:2007:PMA}), 
can potentially be employed for maintaining dynamic graphs on GPUs.
\pma, originally designed for CPUs~\cite{Bender:2005:COB,Bender:2007:PMA}, maintains sorted elements in a partially contiguous fashion by leaving gaps to accommodate fast updates
with a constant bounded gap ratio.
The simultaneously sorted and contiguous characteristic of \pma nicely fits the scenario of GPU streaming graph maintenance.
However, the performance of \pma degrades when updates occur in locations which are close to each other, due to the unbalanced utilization of reserved spaces. 
Furthermore, as streaming updates often come in batches rather than one single update at a time, \pma does not support parallel insertions and 
it is non-trivial to apply \pma to GPUs due to its intricate update patterns
which may cause serious thread divergence and uncoalesced memory access issues on GPUs. 

We thus propose two GPU-oriented algorithms, i.e. \gpma and \gpmaplus, to support efficient parallel batch updates.
\gpma explores a lock-based approach which becomes increasingly
popular due to the recent GPU architectural evolution for supporting atomic operations \cite{davidson2014work, kaleem2016synchronization}.
While \gpma works efficiently for the case where few concurrent updates conflict, e.g., small-size update batches with random updating edges in each batch,  
there are scenarios where massive conflicts occur and hence, we propose a lock-free approach, i.e. \gpmaplus. 
Intuitively, \gpmaplus is a bottom-up approach by prioritizing updates that occur in similar positions.
The update optimizations of our proposed \gpmaplus are able to maximize coalesced memory access
and achieve linear performance scaling w.r.t the number of computation units on GPUs,
regardless of the update patterns.

%
%

In summary, the key contributions of this paper are the following:
\begin{itemize}[leftmargin=*, noitemsep]
\item We introduce a framework for GPU dynamic graph analytics and propose, the first of its kind, a GPU dynamic graph storage scheme to
pave the way for real-time dynamic graph analytics on GPUs. 
\item We devise two GPU-oriented parallel algorithms: \gpma and \gpmaplus,
to support efficient updates against high-speed graph streams.  
%
\item We conduct extensive experiments to show the performance superiority of \gpma and \gpmaplus. 
In particular, we design different update patterns on real and synthetic graph streams
to validate the update efficiency of our proposed algorithms against their CPU counterparts as well as the GPU rebuild baseline.
In addition, 
we implement three real world graph analytic applications on the graph streams to demonstrate 
the efficiency and broad applicability of our proposed solutions.
In order to support larger graphs, we extend our proposed formats to multiple GPUs and demonstrate the scalability of our approach with multi-GPU systems. 
\end{itemize}

The remainder of this paper is organized as follows. 
The related work is discussed in Section~\ref{sec:literature}. 
Section~\ref{sec:overview} presents a general workflow of dynamic graph processing on GPUs.
Subsequently, we describe \gpma and \gpmaplus in Sections~\ref{sec:gpma}-\ref{sec:gpmaplus} respectively. 
Section~\ref{sec:experiment} reports results of a comprehensive experimental evaluation. 
We conclude the paper and discuss some future works in Section~\ref{sec:conclusion}.

\section{Related Work}\label{sec:literature}
In this section, we review related works in three different categories as follows.
\subsection{Graph Stream Processing}
Over the last decade, there has been an immense interest in designing efficient algorithms
for processing massive graphs in the data stream model (see \cite{McGregor:2014:GSA} for a detailed survey). 
This includes the problems of PageRank-styled scores \cite{ohsaka2015efficient}, connectivity \cite{Feigenbaum:2005:OGP}, spanners \cite{lkin:2011:SFD},
counting subgraphs e.g. triangles \cite{Tsourakakis:2009:CTM} and summarization \cite{Tang:2016:GSS}. 
However, these works mainly focus on the theoretical study to achieve the best approximation solution
with linear bounded space. 
Our proposed methods can incorporate existing graph stream algorithms with ease 
as our storage scheme can support most graph representations used in existing algorithms.

Many systems have been proposed for streaming data processing, 
e.g. Storm \cite{Toshniwal:2014:STO}, Spark Streaming \cite{Zaharia:2013:DSF}, Flink \cite{Flink}.
Attracted by its massively parallel performance, 
several attempts have successfully demonstrated the advantages 
of using GPUs to accelerate data stream processing \cite{Verner:2012:SPR,Zhang:2011:GStream}.
However, the aforementioned systems focus on general stream processing and lack support for graph stream processing.
Stinger \cite{Ediger:2012ve} is a parallel solution to support dynamic graph analytics on a single machine.   
More recently, Kineograph \cite{Cheng:2012:KTP}, CellIQ \cite{Iyer:2015:RTC} and GraphTau \cite{Iyer:2016:TGP} 
are proposed to address the need for general time-evolving graph processing under the distributed settings. 
\marked{However, to our best knowledge, existing works focusing on CPU-based time-evolving graph processing will be inefficient on GPUs, because CPU and GPU are two architectures with different design principles and performance concerns in the parallel execution.} We are aware that only one work~\cite{King:2016:DSM} explores the direction of using GPUs to process real-time analytics on dynamic graphs. However, it only supports insertions and lacks an efficient indexing mechanism.

\subsection{Graph Analytics on GPUs}\label{sec:literature:analytics}
Graph analytic processing is inherently data- and compute-intensive. 
Massively parallel GPU accelerators are powerful to achieve supreme performance of many applications. 
Compared with CPU, which is a general-purpose processor featuring large cache size and high single core processing capability, 
GPU devotes most of its die area to a large number of simple Arithmetic Logic Units (ALUs),
and executes code in a SIMT (Single Instruction Multiple Threads) fashion.
With the massive amount of ALUs, GPU offers orders of magnitude higher computational throughput than CPU in applications with 
ample parallelism. 
This leads to a spectrum of works which explore the usage of GPUs to accelerate graph analytics and demonstrate
immense potentials. Examples include breath-first search (BFS) \cite{Liu:2016:ICB}, 
subgraph query \cite{Lin:2014:ESM}, PageRank \cite{Ashari:2014:FSM} and many others. 
The success of deploying specific graph algorithms on GPUs motivates 
the design of general GPU graph processing systems like Medusa \cite{Zhong:2014:SGP} and Gunrock \cite{Wang:2015:GHG}. 
However, the aforementioned GPU-oriented graph algorithms and systems assume 
static graphs. 
To handle dynamic graph scenario, existing works have to perform a rebuild on GPUs against each single update.
DCSR \cite{King:2016:DSM} is the only solution, to the best of our knowledge, which is designed for insertion-only scenarios
as it is based on linked edge block and rear appending technique.
However, it does not support deletions or efficient searches.
We propose \gpma to enable efficient dynamic graph updates (i.e. insertions and deletions) on GPUs in a fine-grained manner. 
In addition,  existing GPU-optimized graph analytics and systems can replace their storage layers directly with ease since the fundamental graph storage schemes used in
existing works can be directly implemented on top of our proposed storage scheme. 

\subsection{Storage Formats on GPUs}
Sparse matrix representation is a popular choice for storing large graphs on GPUs~\cite{cuSparse, cusp, Zhong:2014:SGP, Wang:2015:GHG}
The Coordinate Format \cite{dang2012sliced} (COO) is the simplest format which only stores non-zero matrix entries by their coordinates with values. 
COO sorts all the non-zero entries by the entries' row-column key for fast entry accesses. 
CSR~\cite{Liu:2016:ICB,Ashari:2014:FSM} compresses COO's row indices into an offset array to reduce the memory bandwidth when accessing the sparse matrix. 
To optimize matrices with different non-zero distribution patterns, 
many customized storage formats were proposed, e.g., Block COO~\cite{yan2014yaspmv} (BCCOO), Blocked Row-Column~\cite{ashari2014efficient} (BRC) and Tiled COO~\cite{yang2011fast} (TCOO). 
Existing formats require to maintain a certain sorted order of their storage base units according to the unit's position
in the matrix, e.g. entries for COO and blocks for BCCOO, and still ensure the locality of the units. 
As mentioned previously, few prior schemes can handle
efficient sparse matrix updates on GPUs. 
To the best of our knowledge, \pma \cite{Bender:2005:COB,Bender:2007:PMA} is a common structure which maintains a sorted array in a contiguous manner and supports
efficient insertions/deletions. However, \pma is designed for CPU and no concurrent updating algorithm is ever proposed. 
Thus, we are motivated to propose \gpma and \gpmaplus for supporting efficient concurrent updates
on all existing storage formats.
\section{A dynamic framework on GPUs} \label{sec:overview}

\begin{figure}[t!]\centering
	\includegraphics[width=0.6\linewidth]{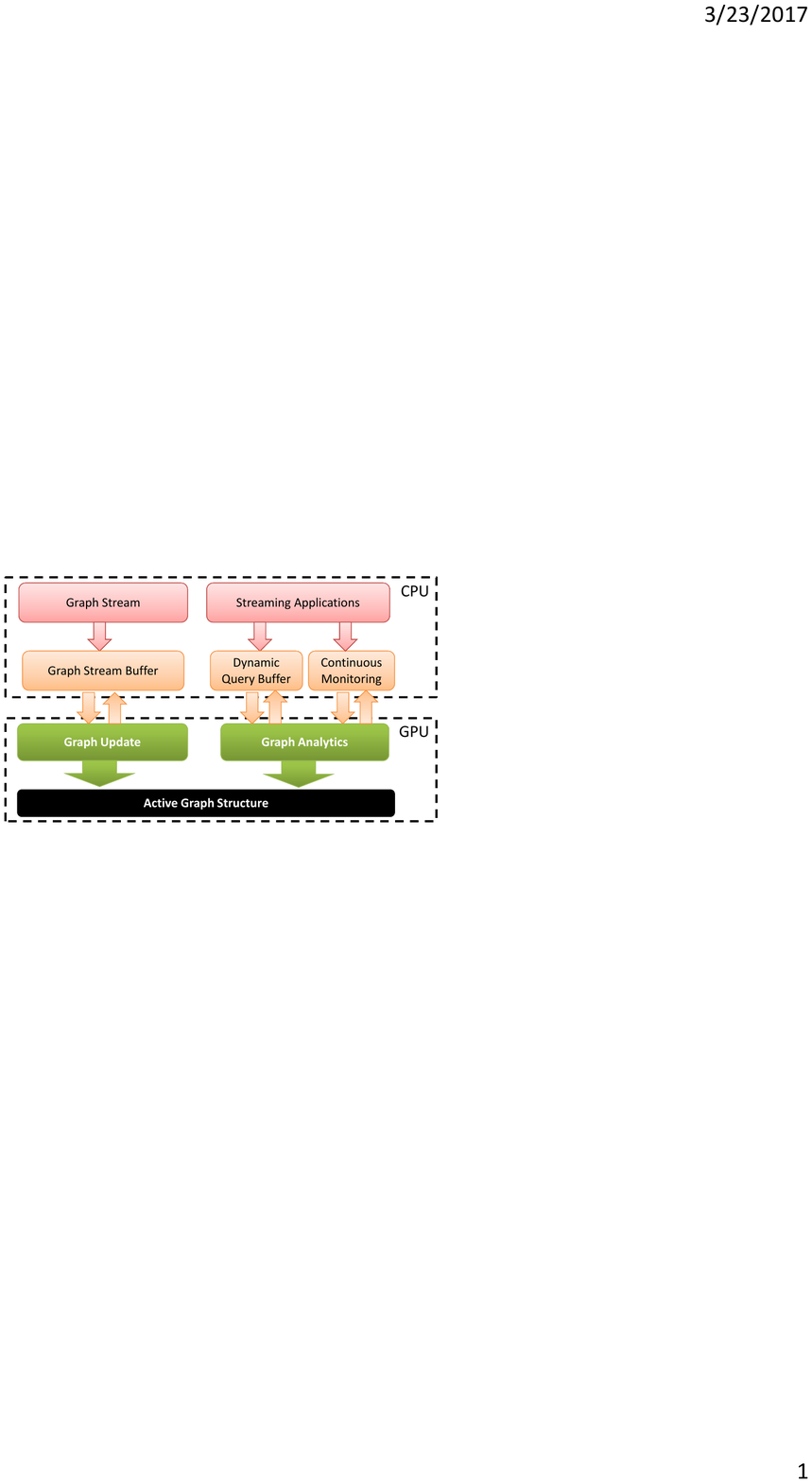}
	\caption{The dynamic graph analytic framework}
	\label{fig:framework}
\end{figure}

To address the need for real-time dynamic graph analytics, 
we offload the tasks of concurrent dynamic graph maintenance
and its corresponding analytic processing to GPUs. 
In this section, we introduce a general GPU dynamic graph analytic framework.
The design of the framework takes into account two major concerns: the framework should not only handle graph updates efficiently but also support existing GPU-oriented graph analytic algorithms 
without forfeiting their performance.

\noindent\textbf{Model:}
We adopt a common sliding window graph stream model \cite{McGregor:2014:GSA,Iyer:2015:RTC,Tang:2016:GSS}. 
The sliding window model consists of an unbounded sequence of elements $(u,v)_t$ \footnote{Our framework handles both directed and undirected edges.}
which indicates the edge $(u,v)$ arrives at time $t$,
and a sliding window which keeps track of the most recent edges. 
As the sliding window moves with time, new edges in the stream keep being inserted into the window and expiring edges are deleted. 
In real world applications, the sliding window of a graph stream can be used to
monitor and analyze fresh social actions that appearing on Twitter \cite{ywang} or
the call graph formed by the most recent CDR data \cite{Iyer:2015:RTC}. 
In this paper, we focus on presenting how to handle edge streams but
our proposed scheme can also handle the dynamic \emph{hyper graph} scenario with hyper edge streams. 

Apart from the sliding window model, the graph stream model which involves explicit insertions and deletions (e.g., a user requests to add or delete a friend in the social network)
is also supported by our scheme as the proposed dynamic graph storage structure is designed to handle random update operations.
That is, our system supports two kinds of updates, \emph{implicit} ones generated from the sliding window mechanism and \emph{explicit} ones generated from upper level applications or users.

%

The overview of the dynamic graph analytic framework is presented in Figure~\ref{fig:framework}.
Given a graph stream, there are two types of streaming tasks supported by our framework. The first type is the ad-hoc queries
such as neighborhood and reachability queries on the graph which is constantly changing. 
The second type is the monitoring tasks like tracking PageRank scores.
We present the framework by illustrating how to handle the graph streams and the corresponding queries 
while hiding data transfer between CPU and GPU, as follows:

\begin{figure}[t!]\centering
	\includegraphics[width=0.6\linewidth]{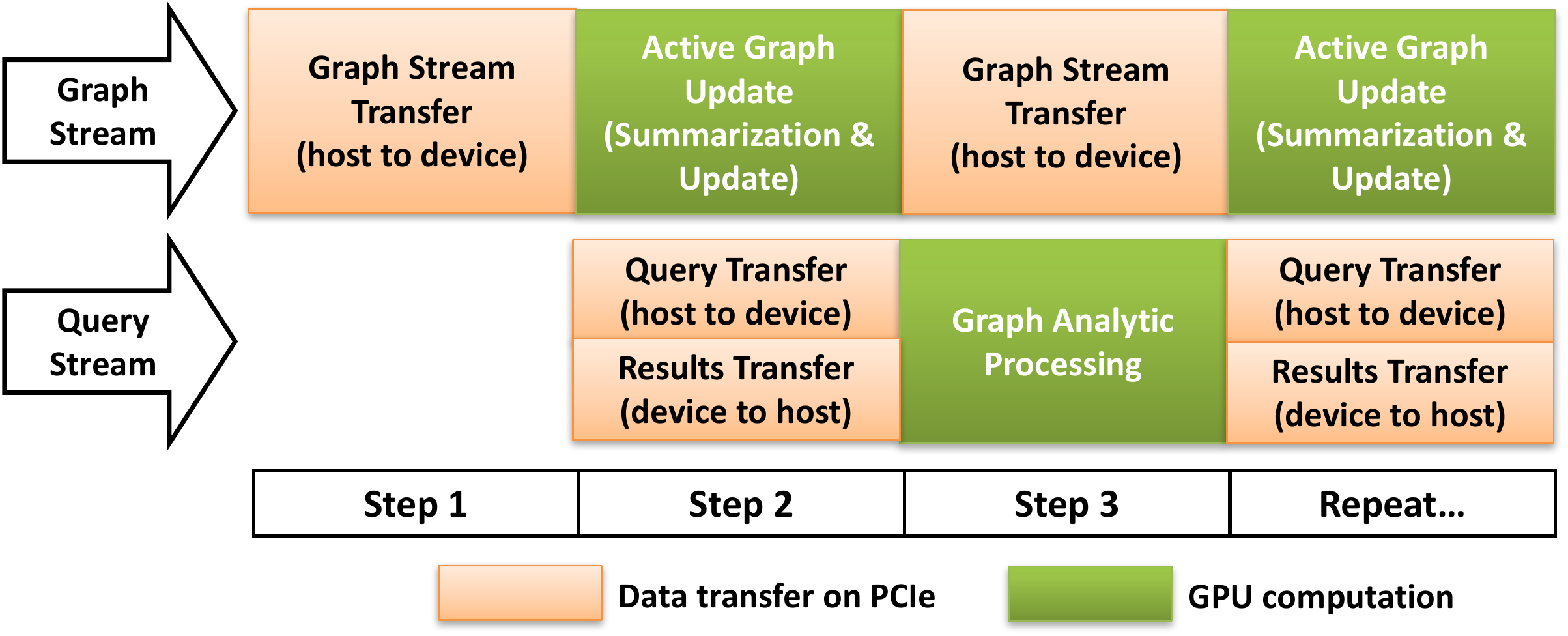}
	\caption{Asynchronous streams}
	\label{fig:pipeline}
\end{figure}

\begin{figure*}[t]\centering
	\hspace{-6em}
	\begin{minipage}[c]{0.5\linewidth}
		\includegraphics[width=\linewidth]{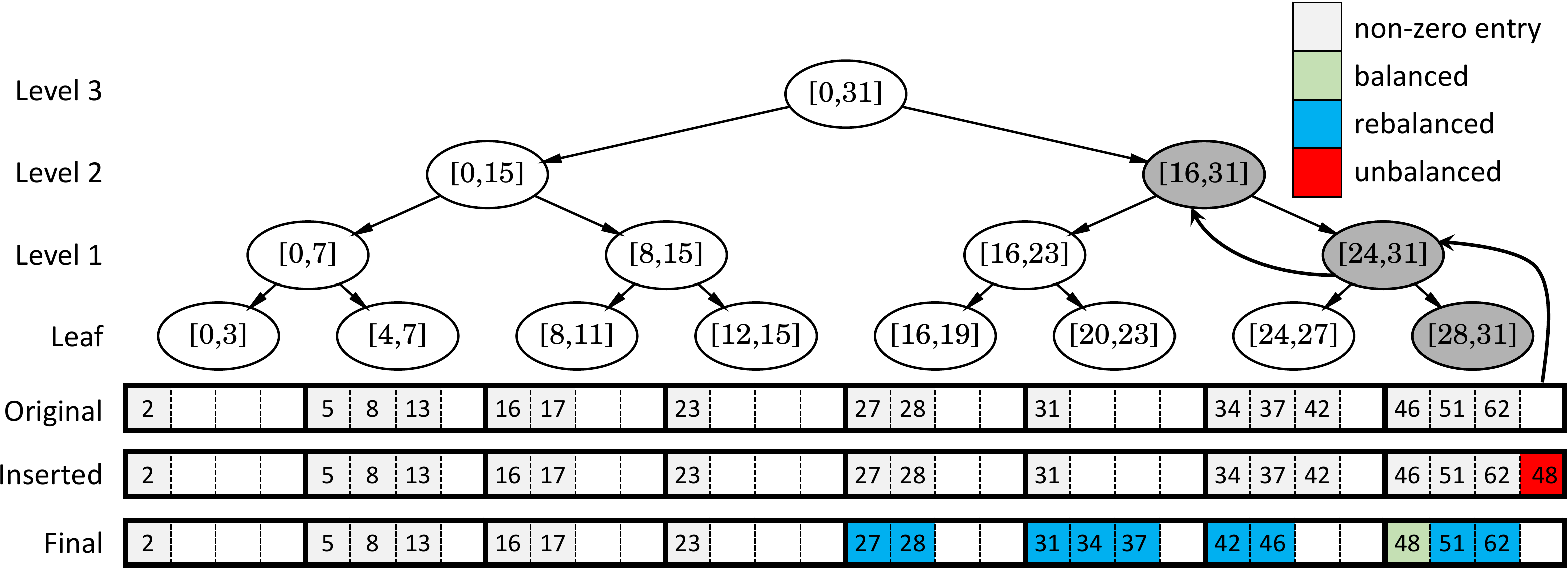}
	\end{minipage}
	\hspace{2mm}
	\begin{minipage}{0.40\linewidth}
		\centering
		\scriptsize
		\begin{tabular}[c]{|c|c|c|c|c|c|c|}
			\hline
			& Leaf & Level 1 & Level 2 & Level 3 \\ \hline
			segment size & 4 & 8 & 16 & 32 \\ \hline
			density lower bound $\rho$ & 0.08 & 0.19 & 0.29 & 0.40 \\ \hline
			density upper bound $\tau$ & 0.92 & 0.88 & 0.84 & 0.80 \\ \hline
			min \# of entries & 1 & 2 & 4 & 8 \\ \hline
			max \# of entries & 3 & 6 & 12 & 24 \\ \hline
		\end{tabular}\\
	\end{minipage}
	\caption{\pma insertion example (Left: PMA for insertion; Right: predefined thresholds)}
	\label{fig:pma_insert_example}
\end{figure*}

\vspace{1mm}
\noindent\textbf{Graph Streams:}
The graph stream buffer module batches the incoming graph streams on the CPU side (host) and 
periodically sends the updating batches to the graph update module located on GPU (device). 
The graph update module updates the ``active'' graph stored on the device by using the batch received. 
The ``active'' graph is stored in the format of our proposed GPU dynamic graph storage structure.
The details of the graph storage structure and how to update the graph efficiently on GPUs will be 
discussed extensively in later sections.

\vspace{1mm}
\noindent\textbf{Queries:}
Like the graph stream buffer, the dynamic query buffer module batches ad-hoc queries submitted against 
the stored active graph, e.g., queries to check the dynamic reachability between pairs of vertices.   
The tracking tasks will also be registered in the continuous 
monitoring module, e.g., tracking up-to-date PageRank.
All ad-hoc queries and monitoring tasks will be transferred to the graph analytic module for GPU accelerated processing.
The analytic module interacts with the active graph to process the queries and the tracking tasks. 
Subsequently, the query results will be transferred back to the host.
As most existing GPU graph algorithms use optimized array formats like CSR to accelerate the performance~\cite{davidson2014work, kaleem2016synchronization, martone2010use, yang2011fast}, 
our proposed storage scheme provides an interface for storing the array formats. 
In this way, existing algorithms can be integrated into the analytic module with ease. 
We describe the details of the integration in Section~\ref{sec:generality}.

\vspace{1mm}
\noindent\textbf{Hiding Costly PCIe Transfer:}
Another critical issue on designing GPU-oriented systems is to minimize the data transfer between 
the host and the device through PCIe. 
Our proposed batching approach allows overlapping data
transfer by concurrently running analytic tasks on the device. 
Figure~\ref{fig:pipeline} shows a simplified schedule with two 
asynchronous streams: graph streams and query streams respectively. 
The system is initialized at Step $1$ where the batch containing incoming graph stream elements
is sent to the device. 
At Step $2$, while PCIe handles bidirectional data transfer for previous query results (device to host) and
freshly submitted query batch (host to device), the graph update module updates the active graph stored on the device. 
At Step $3$, the analytic module processes the received query batch on the device and 
a new graph stream batch is concurrently transferred from the host to the device. 
It is clear to see that, by repeating the aforementioned process,
all data transfers are overlapped with concurrent device computations.

\marked{
\section{{\Large \gpma} Dynamic Graph Processing}\label{sec:gpma}

To support dynamic graph analytics on GPUs, there are two major challenges discussed in the introduction.
The first challenge is to maintain the dynamic graph storage in the device memory of GPUs for efficient update as well as compute.
The second challenge is that the storage strategy should show its good compatibility with existing graph analytic algorithms on GPUs.

In this section, we discuss how to address the challenges with our proposed scheme.
First, we introduce \gpma for GPU resident graph storage to simultaneously achieve update and compute efficiency (Section~\ref{subsec:gpma}). 
Subsequently, we illustrate \gpma's generality in terms of deploying existing GPU based graph analytic algorithms (Section~\ref{sec:generality}).


\subsection{{\Large \gpma} Graph Storage on GPUs}\label{subsec:gpma}
In this subsection, we first discuss the design principles our proposed dynamic graph storage should follow. Then we introduce how to implement our proposal.

\noindent\textbf{Design Principles.} The proposed graph storage on GPUs should take into account the following principles:
\begin{itemize}[noitemsep,leftmargin=*]
\item The proposed dynamic graph storage should efficiently support a broad range of updating operations, including \emph{insertions}, \emph{deletions} and \emph{modifications}. Furthermore, it should have a good locality to accommodate the highly parallel memory access characteristic of GPUs, in order to achieve high memory efficiency.
\item The physical storage strategy should support common logical storage formats and the existing graph analytic solutions on GPUs based on such formats can be adapted easily. 
\end{itemize}
}
\noindent\textbf{Background of PMA.}
\gpma is primarily motivated by a novel structure, Packed Memory Array (\pma \cite{Bender:2005:COB,Bender:2007:PMA}),
which is proposed to maintain sorted elements in a partially continuous fashion
by leaving gaps to accommodate fast updates with a bounded gap ratio. 
\pma is a self-balancing binary tree structure.
Given an array of $N$ entries,
\pma separates the whole memory space into \emph{leaf segments} with $O(\log N)$ length and defines \emph{non-leaf segments} as the space occupied by their descendant segments. 
For any segment located at height $i$ (leaf height is $0$), 
\pma designs a way to assign the lower and upper bound density thresholds for the segment as $\rho_i$ and $\tau_i$ respectively to achieve $O(\log^2 N)$ amortized update complexity. 
Once an insertion/deletion causes the density of a segment to fall out of the range defined by $(\rho_i, \tau_i)$,
\pma tries to adjust the density by re-allocating all elements stored in the segment's parent. 
The adjustment process is invoked recursively and will only be terminated if all segments' densities fall
back into the range defined by \pma's density thresholds.
For an ordered array, modifications are trivial. Therefore, we mainly discuss insertions because deletions are the dual operation of insertions in \pma. 

\begin{example}\label{example:pma}
Figures \ref{fig:pma_insert_example} presents an example for \pma insertion. 
Each segment is uniquely identified by an interval (starting and ending position of the array) displayed in the corresponding tree node, 
e.g., the root segment is \textbf{segment-[0,31]} as it covers all 32 spaces. 
All values stored in \pma are displayed in the array.
The table in the figure shows predefined parameters including the segment size, 
the assignment of density thresholds ($\rho_i$, $\tau_i$) and the corresponding minimum and maximum entry sizes at different heights of the tree.
We use these setups as a running example throughout the paper.
To insert an entry. i.e. 48, into \pma, the corresponding leaf segment is firstly identified by a binary search, 
and the new entry is placed at the rear of leaf segment. 
The insertion causes the density of the leaf segment $4$ to exceed the threshold $3$. 
Thus, we need to identify the nearest ancestor segment which can accommodate the insertion without violating the thresholds, i.e., the \textbf{segment-[16,31]}.
Finally, the insertion is completed by re-distpatching all entries evenly in \textbf{segment-[16,31]}.
\end{example}

\marked{
\begin{lemma}[\cite{Bender:2005:COB,Bender:2007:PMA}]\label{lem:pma}
	The amortized update complexity of \pma is proved to be $O(\log^2 N)$ in
the worst case and $O(\log N)$ in the average case.
\end{lemma}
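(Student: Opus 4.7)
The plan is to use a standard amortized argument that charges the cost of each structural rebalance to the operations that caused it, exploiting the carefully chosen density thresholds. Since modifications do not change the size of the array and deletions are symmetric, I would only analyze insertions in detail and remark that the deletion argument is dual.

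First I would quantify the gap between the density thresholds at adjacent levels. Using Bender et al.'s assignment, for a tree of height $\ell = \Theta(\log N)$ the upper thresholds satisfy $\tau_h - \tau_{h+1} = \Theta(1/\log N)$ (and analogously for $\rho$), so the allowed density "band" shrinks linearly as we climb. Let $s_h$ denote the size of a segment at height $h$, where $s_0 = \Theta(\log N)$ and $s_h = 2^h s_0$. The key structural observation is that whenever \pma rebalances the smallest ancestor at height $h$ whose density still lies inside $(\rho_h, \tau_h)$, it redistributes the entries evenly inside that segment; the density immediately after the rebalance is therefore strictly inside the tighter band $(\rho_{h+1}, \tau_{h+1})$ of its parent.

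From that observation, the next rebalance at height $h$ inside the same segment cannot happen until the density drifts out of $(\rho_h, \tau_h)$ again, which requires at least $\Delta_h = (\tau_h - \tau_{h+1}) \cdot s_h = \Omega(s_h / \log N)$ insertions touching that segment. I would then charge the $O(s_h)$ work of the rebalance to those $\Omega(s_h / \log N)$ insertions, giving an amortized cost of $O(\log N)$ per insertion per level. Since each insertion can trigger at most one rebalance along the root-to-leaf path and the tree has $O(\log N)$ levels, summing yields the worst-case amortized bound of $O(\log^2 N)$.

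For the average-case $O(\log N)$ bound I would model the insertion positions as drawn uniformly at random over the occupied portion of the array, so that the number of insertions falling into a fixed segment of size $s_h$ is a binomial with mean $\Theta(s_h / N)$ of the total workload. Under this randomization, rebalances at higher levels become exponentially rarer than at lower levels, and a geometric sum over heights yields the improved $O(\log N)$ amortized bound. The main obstacle is pinning down the charging invariant cleanly so that the $\Omega(s_h/\log N)$ "credit" at every level is never double-counted across nested rebalances; I would handle this via a potential function that assigns each entry a potential proportional to how far its enclosing segment's density has drifted from the post-rebalance center, so that rebalance work is exactly paid for by potential release. Since the full argument is carried out in detail in \cite{Bender:2005:COB,Bender:2007:PMA}, I would cite those references for the technical tightness and present only the above structural steps in the paper.
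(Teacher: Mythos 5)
The paper does not prove this lemma at all: it is imported verbatim from Bender et al.\ \cite{Bender:2005:COB,Bender:2007:PMA}, and the bracketed citation in the lemma header is the entirety of the paper's justification. Your sketch is therefore not competing with an in-paper argument but reconstructing the external one, and it follows the standard charging scheme from those references: thresholds spaced by $\Theta(1/\log N)$ per level, a post-rebalance slack of $\Omega(s_h/\log N)$ insertions before a height-$h$ segment can violate its band again, $O(s_h)$ rebalance work charged to those insertions, and a sum over the $O(\log N)$ levels. One statement in your write-up is wrong as literally written: after \pma rebalances the smallest \emph{non-violating} ancestor at height $h$, that segment's density lies in $(\rho_h,\tau_h)$ but need not lie in the parent's tighter band $(\rho_{h+1},\tau_{h+1})$ --- any segment whose density sits between $\tau_{h+1}$ and $\tau_h$ is a legal rebalance target, and redistribution does not change its own density. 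The property you actually need (and then correctly use via $\Delta_h=(\tau_h-\tau_{h+1})\cdot s_h$) is the dual one about \emph{descendants}: after the rebalance at height $h$, every descendant at height $h'<h$ inherits that density, which sits strictly inside its own looser band $(\rho_{h'},\tau_{h'})$ with slack at least $\Theta(1/\log N)$, so it cannot violate again until $\Omega(s_{h'}/\log N)$ updates land in it. With that correction the worst-case $O(\log^2 N)$ charging goes through. The average-case $O(\log N)$ claim is the part you leave essentially untouched; that bound concerns random insertion patterns and its proof in \cite{Bender:2007:PMA} is substantially more delicate than a geometric sum over heights, so deferring it wholesale to the citation --- exactly as the paper itself does --- is the right call.
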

}

It is evident that \pma could be employed for dynamic graph maintenance 
as it maintains sorted elements efficiently with high locality on CPU.
However, the update procedure described in \cite{Bender:2007:PMA}
is inherently sequential and no concurrent algorithms have been proposed. 
To support batch updates of edge insertions and deletions for efficient graph stream analytic processing,
we devise \gpma to support concurrent \pma updates on GPUs.
\marked{
Note that we focus on the insertion process for a concise presentation because the deletion process is a dual process w.r.t. the insertion process in \pma.
}

\noindent\textbf{Concurrent Insertions in GPMA.}
Motivated by \pma on CPUs, we propose \gpma to handle a batch of insertions concurrently on GPUs. 
Intuitively, \gpma assigns an insertion to a thread and concurrently executes \pma algorithm for each thread with a lock-based approach to ensure consistency. 
More specifically, all leaf segments of insertions are identified in advance, 
and then each thread checks whether the inserted segments still satisfy their thresholds from bottom to top. 
For each particular segment, it is accessed in a mutually exclusive fashion. 
Moreover, all threads are synchronized after updating all segments located at the same tree height
to avoid possible conflicts as segments at a lower height are fully contained in the segments at a higher level. 

\begin{figure}[t]\centering
	\includegraphics[width=0.6\linewidth]{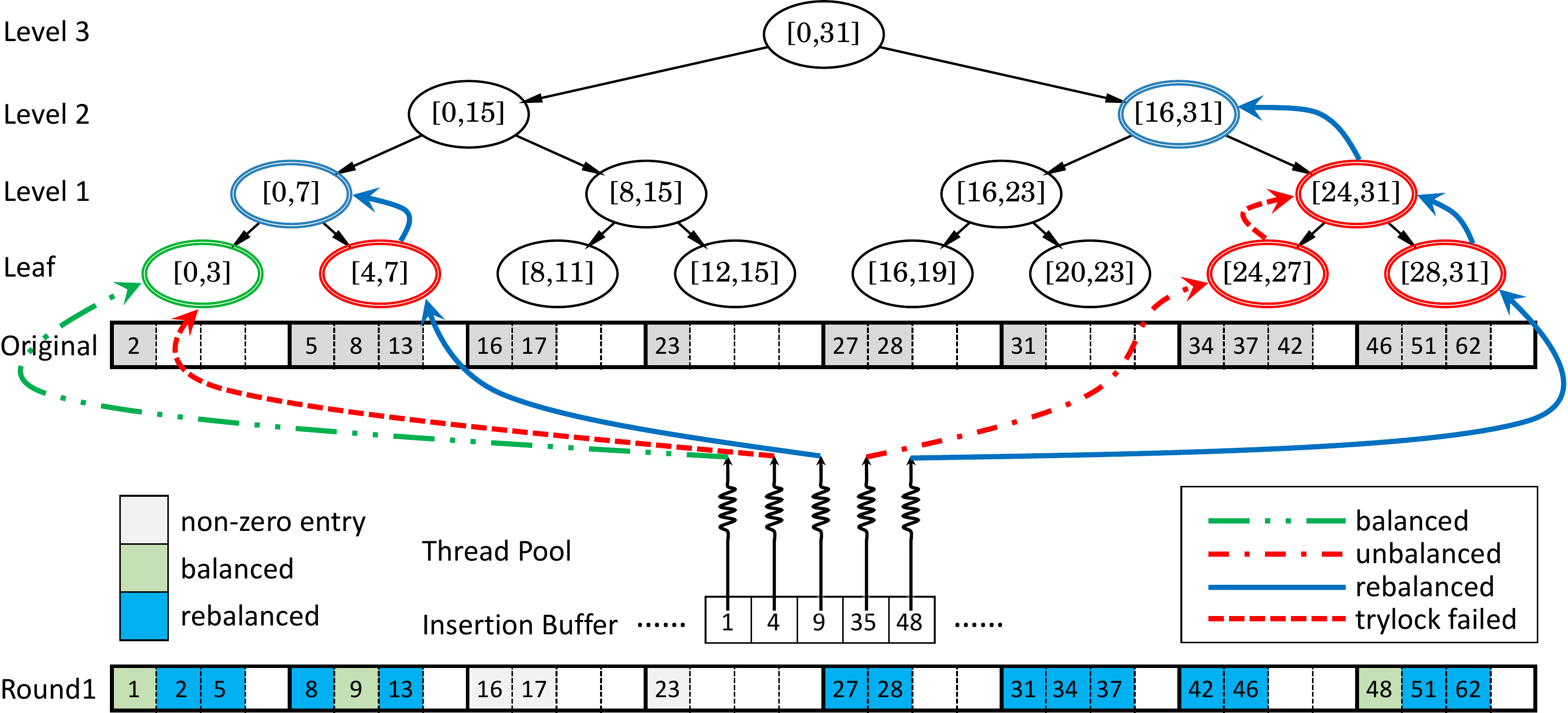}
	\caption{\marked{\gpma concurrent insertions}}
	\label{fig:naive_gpma_insert}
\end{figure}

Algorithm \ref{alg:pma_bulk_update} presents the pseudocode for \gpma concurrent insertions. 
We highlight the lines added to the original PMA update algorithm in order to achieve concurrent update of GPMA.
As shown in line \ref{alg:line:while_not_empty}, 
all entries in the insertion set are iteratively tried until all of them take effect. 
For each iteration shown in line \ref{alg:line:gpma1_try_insert}, all threads start at leaf segments and attempt the insertions in a bottom-up fashion. 
If a particular thread fails the mutex competition in line \ref{alg:line:trylock}, it aborts immediately and waits for the next attempt. 
Otherwise, it inspects the density of the current segment. 
If the current segment does not satisfy the density requirement, it will try the parent segment in the next loop iteration (lines \ref{alg:line:overflowed}-\ref{alg:line:overflow-end}). 
Once an ancestor segment is able to accommodate the insertion, it merges the new entry in line \ref{alg:line:merge_into_segment}
and the entry is removed from the insertion set. 
Subsequently, the updated segment will re-dispatch all its entries evenly and the process is terminated.

\begin{example}\label{emp:gpma_insert}
Figure \ref{fig:naive_gpma_insert} illustrates an example with five insertions, i.e. $\{1,4,9,35,48\}$, for concurrent \gpma insertion. 
The initial structure is the same as in Example~\ref{example:pma}. 
After identifying the leaf segment for insertion, 
threads responsible for \textbf{Insertion-1} and \textbf{Insertion-4} compete for the same leaf segment. 
Assuming \textbf{Insertion-1} succeeds in getting the mutex, \textbf{Insertion-4} is aborted.
Due to enough free space of the segment, \textbf{Insertion-1} is successfully inserted. 
Even though there is no leaf segment competition for \textbf{Insertions-9,35,48}, 
they should continue to inspect the corresponding parent segments because all the leaf segments do not satisfy the density requirement after the insertions. 
\textbf{Insertions-35,48} still compete for the same level-1 segment and \textbf{Insertion-48} wins. 
For this example, three of the insertions are successful and the results are shown in the bottom of Figure~\ref{fig:naive_gpma_insert}.
\textbf{Insertions-4,35} are aborted in this iteration and will wait for the next attempt.
\end{example}

\begin{figure}[t]
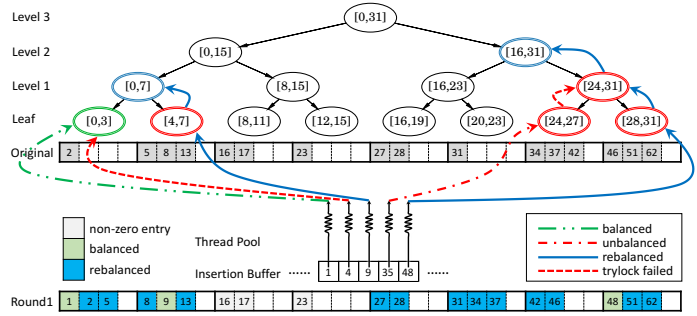

\begin{algorithm}[H]
\caption{GPMA Concurrent Insertion} 
\begin{algorithmic}[1]
	\Procedure{GPMAInsert}{Insertions $I$}
		\While{$I$ is \textbf{not} empty}\label{alg:line:while_not_empty}
			\State \textbf{\highlight{parallel} for} $i$ \textbf{in} $I$
				\Indent
					\State Seg $s$ $\gets$ \textsc{BinarySearchLeafSegment}(i)
					\State \textsc{TryInsert}(s, i, I)
				\EndIndent
			\State \textbf{\highlight{synchronize}}
			\State \highlight{release locks on all segments}
		\EndWhile
	\EndProcedure
	\Statex  
	\Function{TryInsert}{Seg $s$, Insertion $i$, Insertions $I$}
		\While{$s \neq $ root}\label{alg:line:gpma1_try_insert}
			\State \textbf{\highlight{synchronize}}
			\If{\highlight{fails to lock $s$}}\label{alg:line:trylock}
				\State \textbf{\highlight{return}} \Comment insertion aborts
			\EndIf

				\If{$(|s|+1)/\text{\emph{capacity}(s)}<\tau$}\label{alg:line:overflowed}
				\State $s$ $\gets$ parent segment of $s$  \label{alg:line:overflow-end}
				\Else
					\State \textsc{Merge}($s, i$)\label{alg:line:merge_into_segment}
					\State re-dispatch entries in $s$ evenly
					\State remove $i$ from $I$
					\State \textbf{return} \Comment insertion succeeds
				\EndIf

		\EndWhile
		\State double the space of the root segment
	\EndFunction
\end{algorithmic}
\label{alg:pma_bulk_update}
\end{algorithm}
\end{figure}

\marked{
\subsection{Adapting Graph Algorithms to {\Large \gpma}}\label{sec:generality}
Existing graph algorithms often use sparse matrix format to store the graph entries since most large graphs are naturally sparse\cite{Akoglu:2015:GBA}.
Although many different sparse storage formats have been proposed, most of the formats assume a specific order to organize the non-zero entries.
These formats enforce the order of the graph entries to optimize their specific access patterns,
e.g., row-oriented (COO\footnote{Generally, COO means ordered COO and it can also be column-oriented.}),  diagonal-oriented (JAD), and block-/tile-based (BCCOO, BRC and TCOO). 
It is natural that the ordered graph entries can be projected into an array and these similar formats can be supported by \gpma easily.
Among all formats, we choose CSR as an example to illustrate how to adapt the format to \gpma.

\vspace{1em}
\noindent\textbf{CSR as a case study.}
CSR is most widely used by existing algorithms on sparse matrices or graphs. 
CSR compresses COO's row indices into an offset array, which contributes to reducing the memory bandwidth when accessing the sparse matrix, and achieves a better workload estimation for skewed graph distribution (e.g., power-law distribution).
The following example demonstrates how to implement CSR on \gpma.
}
\begin{example}\label{emp:pmacsr}
In Figure~\ref{fig:csr_gpma}, we have a graph of three vertices and six edges.
The number on each edge denotes the weight of the corresponding edge. 
The graph is represented as a sparse matrix and is further transformed to the CSR format shown in the upper right. 
\marked{CSR sorts all non-zero entries in the row-orient order, and compresses row indices into intervals as a row offset array.}
The lower part denotes the \gpma representation of this graph. In order to maintain the row offset array without synchronization among threads, we add a guard entry whose column index is $\infty$ during concurrent insertions. That is to say, when the guard is moved, the corresponding element in row offset array will change.
\end{example}

\begin{figure}[t]
\begin{algorithm}[H]
\caption{Breadth-First Search} 
\begin{algorithmic}[1]
	\Procedure{BFS}{Graph \emph{G}, Vertex \emph{s}}
		\For{each vertex $u \in G.V - \{s\}$}
			\State $u.visited$ = \textbf{false}
		\EndFor
		\State $Q \gets \phi$
		\State $s.visited \gets$  \textbf{true}
		\State ENQUEUE($Q, s$)
		\While{$Q \neq \phi$}
			\State $u \gets$  DEQUEUE($Q$)
			\For{each $v \in G.Adj[u]$}
				\If{\highlight{IsEntryExist($v$)}}
					\If{$v.visited$ = \textbf{false}}
						\State $v.visited \gets$ \textbf{true}
						\State ENQUEUE($v$)
					\EndIf
				\EndIf
			\EndFor
		\EndWhile
	\EndProcedure
\end{algorithmic}
\label{alg:bfs}
\end{algorithm}
\end{figure}

\begin{figure}[t]
\begin{algorithm}[H]
\caption{GPU-based BFS Neighbour Gathering} 
\begin{algorithmic}[1]
	\Procedure{Gather}{Vertex \emph{frontier}, Int \emph{csrOffset}}
		\State \{\emph{r}, \emph{rEnd}\} $\gets$ \emph{csrOffset}[\emph{frontier}, \emph{frontier} + 1] \label{alg:line:get_offset}
		\For{($i$ $\gets$ $r$+threadId; $i$$<$\emph{rEnd}; $i$+=threadNum)}
			\If{\highlight{IsEntryExist($i$)}}\label{alg:line:pma_valid}
				ParallelGather($i$)
			\EndIf
		\EndFor
	\EndProcedure
\end{algorithmic}
\label{alg:bfs2}
\end{algorithm}
\end{figure}

\marked{
Given a graph stored on \gpma, the next step is to adapt existing graph algorithms
to \gpma. In particular, how existing algorithms access the graph entries stored on \gpma is of vital importance.
As for the CSR example, most algorithms access the entries by navigating through CSR's ordered array\cite{davidson2014work, kaleem2016synchronization, martone2010use, yang2011fast}.
We note that a CSR stored on \gpma is also an array which has bounded gaps interleaved with the graph entries.
Thus, we are able to efficiently replace the operations of arrays with the operations of \gpma. We will demonstrate how we can do this replacement as follows.


Algorithm~\ref{alg:bfs} illustrates the pseudocode of the classic BFS algorithm. 
We should pay attention to line 10, which is highlighted. 
Compared with the raw adjacency list, 
the applications based on \gpma need to guarantee the current vertex being traversed is a valid neighbour instead of an invalid space in \gpma's gap.

Algorithm~\ref{alg:bfs} provides a high-level view for \gpma adaption. Furthermore, we present how it adapts \gpma in the parallel GPU environment with some low-level details. Algorithm~\ref{alg:bfs2} is the pseudocode of
the \emph{Neighbour Gathering} parallel procedure, 
which is a general primitive for most GPU-based vertex-centric graph processing models \cite{Merrill:2015:HPS,davidson2014work,Fu:2014fy}. 
This primitive plays a role similar to line 10 of Algorithm~\ref{alg:bfs} but in a parallel fashion in accessing the neighbors of a particular vertex.
%
When traversing all neighbours of frontiers, \emph{Neighbour Gathering} follows the SIMT manner, 
which means that there are \emph{threadNum} threads as a group assigned to one of the vertex frontier 
and the procedure in Algorithm~\ref{alg:bfs2} is executed in parallel.
For the index range (in the CSR on \gpma) of the current frontier given by csrOffset (shown in line~\ref{alg:line:get_offset}), 
each thread will handle the corresponding tasks according to its \emph{threadId}. For GPU-based BFS, 
the visited labels of neighbours for all frontiers will not be judged immediately after neighbours are accessed. 
Instead, they will be compacted to contiguous memory in advance for higher memory efficiency.

Similarly, we can also carry out the entry existing checking for other graph applications to adapt them to \gpma. To summarize, \gpma can be adapted to common graph analytic applications which are
implemented in different representation and execution models, including matrix-based (e.g., PageRank), vertex-centric (e.g., BFS) and edge-centric (e.g., Connected Component). 
}

%

\begin{figure}[t]
\centering
\hspace{2mm}
\begin{minipage}{0.25\linewidth}
	\includegraphics[width=\linewidth]{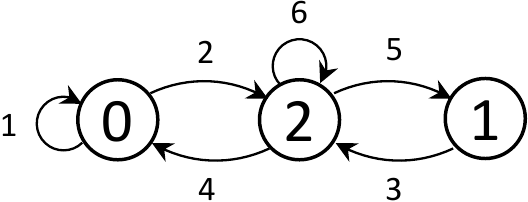}
	\vspace{-7mm}
	\caption*{\textbf{Example Graph}}
\end{minipage}
\hfill
\centering
\begin{minipage}{0.24\linewidth}
		\scriptsize
		\begin{tabular}[c]{rl}
			Row Offset &[0 2 3 6] \\
			Column Index &[0 2 2 0 1 2] \\
			Value &[1 2 3 4 5 6]
		\end{tabular}
		\vspace{2mm}
		\caption*{\textbf{CSR Format}}
\end{minipage}
\hfill
\centering
\begin{minipage}{0.4\linewidth}
	\includegraphics[width=\linewidth]{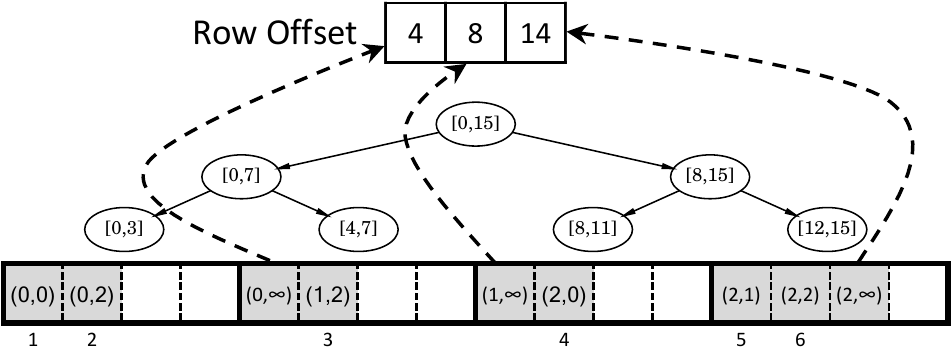}
\end{minipage}
\caption{\gpma based on CSR}
\label{fig:csr_gpma}
\end{figure}

\section{{\Large \gpmaplus}: {\Large \gpma} Optimization}\label{sec:gpmaplus}

Although \gpma can support concurrent graph updates on GPUs,
the update algorithm is basically a lock-based approach and can suffer 
from serious performance issue when different threads compete for the same lock. 
In this section, we propose a lock-free approach, i.e. \gpmaplus, 
which makes full utilization of GPU's massive multiprocessors.
We carefully examine the performance bottleneck of \gpma in Section~\ref{sec:bottle_analysis}. 
Based on the issues identified, we propose \gpmaplus for optimizing concurrent GPU updates with a lock-free approach in Section~\ref{sec:gpmaplus-update}. 

\subsection{Bottleneck Analysis}\label{sec:bottle_analysis}
The following four critical performance issues are identified for \gpma:

\begin{itemize}[noitemsep,leftmargin=*]
	\item \textbf{Uncoalesced Memory Accesses}: Each thread has to traverse the tree from the root segment to identify the corresponding leaf segment to be updated. 
	For a group of GPU threads which share the same memory controller (including access pipelines and caches), 
	memory accesses are uncoalesced and thus, cause additional IO overheads.

	\item \textbf{Atomic Operations for Acquiring Lock}: 
	Each thread needs to acquire the lock before it can perform the update. 
	Frequently invoking atomic operations for acquiring locks will bring huge overheads, especially for GPUs.


	\item \textbf{Possible Thread Conflicts}: When two threads conflict on a segment, one of them has to abort and wait for the next attempt. 
	In the case where the updates occur on segments which are located proximately,
	\gpma will end up with low parallelism. As most real world large graphs have the power law property, 
	the effect of thread conflicts can be exacerbated. 
	
	\item \textbf{Unpredictable Thread Workload}: 
	Workload balancing is another major concern for optimizing concurrent algorithms~\cite{stratton2012optimization}. 
	The workload for each thread in \gpma is unpredictable because: 
	(1) It is impossible to obtain the last non-leaf segment traversed by each thread in advance; 
	(2) The result of lock competition is random. 
	The unpredictable nature triggers the imbalanced workload issue for \gpma. 
	In addition, threads are grouped as warps on GPUs. 
	If a thread has a heavy workload, the remaining threads of the same warp are idle and cannot be re-scheduled. 
\end{itemize}

\begin{figure*}[t]\centering
	\includegraphics[width=\linewidth]{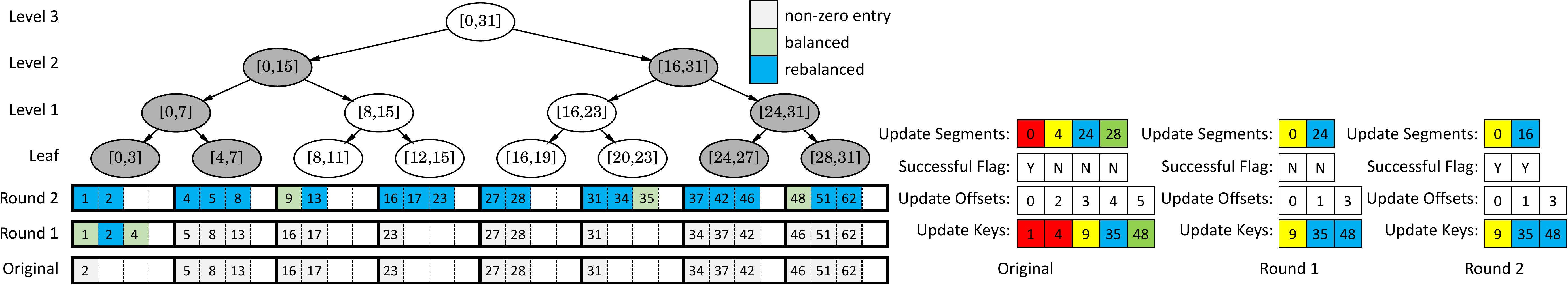}
	\caption{\gpmaplus concurrent insertions (best viewed in color)}
	\label{fig:gpma2_insert}
\end{figure*}

\subsection{Lock-Free Segment-Oriented Updates}\label{sec:gpmaplus-update}
Based on the discussion above, 
we propose \gpmaplus to lift all bottlenecks identified. 
The proposed \gpmaplus does not rely on lock mechanism and achieves high thread utilization simultaneously.
Existing graph algorithms can be adapted to GPMA+ in the same manner as GPMA.

Compared with \gpma, which handles each update separately, 
\gpmaplus concurrently processes updates based on the segments involved. 
It breaks the complex update pattern into existing concurrent GPU primitives to achieve maximum parallelism.
There are three major components in the \gpmaplus update algorithm:
%
\begin{enumerate}[label=(\arabic*),noitemsep,leftmargin=*]
\item The updates are first sorted by their keys and then dispatched to GPU threads for locating their corresponding leaf segments according to the sorted order. 
\item The updates belonging to the same leaf segment are grouped for processing and \gpmaplus processes the updates level by level in a bottom-up manner.
\item In any particular level, we leverage GPU primitives to invoke all computing resources for segment updates. 
\end{enumerate}
%
We note that, the issue of \emph{uncoalesced memory access} in \gpma is resolved by component (1)
as the updating threads are sorted in advance to achieve similar traversal paths.
Component (2) completely avoids the use of locks, which solves the problem of
\emph{atomic operations} and \emph{thread conflicts}.
Finally, component (3) makes use of GPU primitives to achieve \emph{workload balancing} among all GPU threads.



\begin{figure}[t]
\begin{algorithm}[H]
\caption{GPMA+ Segment-Oriented Insertion} 
\begin{algorithmic}[1]
	\Procedure{GpmaPlusInsertion}{Updates $U$}\label{alg:line:gpma_update}
		\State \textsc{Sort}($U$)\label{alg:line:sort_by_key}
		\State Segs $S \gets \textsc{BinarySearchLeafSegments}$($U$)\label{alg:line:binary_search}
		\While{root segment is not reached}\label{alg:line:do_begin}
		\State Indices $I \gets \emptyset$
		\State Segs $S^* \gets \emptyset$
		\State $(S^*,I)$ $\gets$ \textsc{UniqueSegments}($S$)\label{alg:line:unique_insertion}
		\State \textbf{parallel for} $s \in S^*$ \label{alg:line:while_statement}
		\Indent
			\State \textsc{TryInsert+}($s$, $I$, $U$)\label{alg:line:try_insert}
		\EndIndent 
		\If{$U = \emptyset$}
			\State \textbf{return}\label{alg:line:return_successful}
		\EndIf
		\State \textbf{parallel for} $s \in S$ \label{alg:line:parent_begin}
		\Indent
			\If{$s$ does not contain any update}
				\State remove $s$ from $S$
			\EndIf
			\State $s \gets $ parent segment of $s$\label{alg:line:parent_end}
		\EndIndent
	\EndWhile\label{alg:line:do_end}
	\State $r \gets$ double the space of the old root segment
	\State \textsc{TryInsert+}($r$, $\emptyset$, $U$)
	\EndProcedure
	\Statex
	\Function{UniqueSegments}{Segs $S$}
		\State $(S^*, Counts) \gets \textsc{RunLengthEncoding}(S)$\label{alg:line:rle}
		\State Indices $I \gets \textsc{ExclusiveScan}(Counts)$\label{alg:line:exscan}
		\State \Return $(S^*,I)$
	\EndFunction
	\Statex
	\Function{TryInsert+}{Seg $s$, Indices $I$, Updates $U$}
		\State $n_s$ $\gets$ \textsc{CountSegment}($s$)\label{alg:line:campact_valid_elements}
		\State $U_s$ $\gets$ \textsc{CountUpdatesInSegment}($s$,$I$,$U$)
		\If{$(n_s + |U_s|)/capacity(s)<\tau$}
			\State \textsc{Merge}($s$, $U_s$)\label{alg:gpma_plus_merge}
			\State re-dispatch entries in $s$ evenly
			\State remove $U_s$ from $U$
		\EndIf\label{alg:line:end_try_insert}
	\EndFunction
\end{algorithmic}
\label{alg:pma2_bulk_update}
\end{algorithm}
\end{figure}

We present the pseudocode for \gpmaplus's segment-oriented insertion in the procedure \newline \textsc{GpmaPlusInsertion} of Algorithm \ref{alg:pma2_bulk_update}. 
Note that, similar to Section~\ref{sec:gpma} (\gpma), we focus on presenting the insertions for \gpmaplus and the deletions could be naturally inferred. 
The inserting entries are first sorted by their keys in line~\ref{alg:line:sort_by_key} and the corresponding segments are then identified in line~\ref{alg:line:binary_search}.
Given the update set $U$, 
\gpmaplus processes updating segments level by level in lines~\ref{alg:line:do_begin}-\ref{alg:line:do_end} until all updates are executed successfully (line~\ref{alg:line:return_successful}).
In each iteration, \textsc{UniqueInsertion} in line~\ref{alg:line:unique_insertion} groups update entries belonging to the same segments into unique segments, i.e., $S^*$,
and produces the corresponding index set $I$ for quick accesses of updates entries located in a segment from $S^*$. 
As shown in lines~\ref{alg:line:rle}-\ref{alg:line:exscan}, \textsc{UniqueSegments} only utilizes standard GPU primitives, 
i.e. \textsc{RunLenghtEncoding} and \textsc{ExclusiveScan}. \textsc{RunLenghtEncoding} compresses an input array by merging runs of an element into a single element. 
It also outputs a count array denoting the length of each run.
\textsc{ExclusiveScan} calculates, for each entry $e$ in an array, the sum of all entries before $e$.
Both primitives have very efficient parallelized GPU-based implementation which makes full utilization of the massive GPU cores. 
In our implementation, we use the NVIDIA CUB library \cite{CUB} for these primitives.
Given a set of unique updating segments, \textsc{TryInsert+} first checks if a segment $s$ has enough space for accommodating the updates 
by summing the valid entries in $s$ (\textsc{CountSegment}) and the number of updates in $s$ (\textsc{CountUpdatesInSegment}). 
If the density threshold is satisfied, the updates will be materialized by merging the inserting entries with existing entries in the segment (as shown in line~\ref{alg:gpma_plus_merge}).
Subsequently, all entries in the segment will be re-dispatched to balance the densities.    
After \textsc{TryInsert+}, the algorithm will terminate if there are no entries to be updated. 
Otherwise, \gpmaplus will advance to higher levels by setting all remaining segments to their parent segments (lines~\ref{alg:line:parent_begin}-\ref{alg:line:parent_end}).
The following example illustrates \gpmaplus's segment-oriented updates.

\marked{
\begin{example}\label{emp:gpmaplus}
Figure \ref{fig:gpma2_insert} illustrates an example for \gpmaplus insertions with the same setup as in example \ref{emp:gpma_insert}. 
The left part is \gpmaplus's snapshots in different rounds during this batch of insertions. The right part denotes the corresponding array information after the execution of each round. 
Five insertions are grouped into four corresponding leaf segments (denoted in different colours). 

For the first iteration at the leaf level, \textbf{Insertions-1,4} of the first segment (denoted as red) is merged into the corresponding leaf segment, then its success flag is marked and will not be considered in the next round. The remaining intervals fail in this iteration and their corresponding segments will upgrade to their parent segments. It should be noted that the blue and the green grids belong to the same parent segment and therefore, will be merged and then dispatched to their shared parent segment (as shown in Round1). In this round, both segments (denoted as yellow and blue) cannot satisfy the density threshold, and their successful flags are not checked. In Round2, both update segments can be merged by the corresponding insertions and no update segments will be considered in the next round since all of them are flagged.
\end{example}
}

In Algorithm \ref{alg:pma2_bulk_update}, 
\textsc{TryInsert+} is the most important function as it handles all the corresponding insertions with no conflicts.
Moreover, it achieves a balanced workload for each concurrent task.
This is because \gpmaplus handles the updates level by level and each segment to be updated in a particular level has
exactly the same capacity.
However, segments in different levels have different capacities.
Intuitively, the probability of updating a segment with a larger size (a segment closer to the root) is much lower than
that of a segment with a smaller size (a segment closer to the leaf). 
To optimize towards the GPU architecture, we propose the following optimization strategies for \textsc{TryInsert+}
for segments with different sizes.



\begin{itemize}[noitemsep,leftmargin=*]
	\item \textbf{Warp-Based}: For a segment with entries not larger than the warp size, the segment will be handled by a warp. 
	Since all threads in the same warp are tied together and warp-based data is held by registers,
	updating a segment by a warp does not require explicit synchronization and will obtain superior efficiency.
	
	\item \textbf{Block-Based}: For a segment of which the data can be loaded in GPU's shared memory, block-based approach is chosen. 
	Block-based approach executes all updates in the shared memory. 
	As shared memory has much larger size than warp registers, block-based approach can handle large segments efficiently. 

	\item \textbf{Device-Based}: For a segment with the size larger than the size of the shared memory, we handle them via global memory and rely on kernel synchronization.  
	Device-based approach is slower than the two approaches above, but it has much less restriction on memory size (less than device memory amount) and is not invoked frequently.
\end{itemize}

We refer interested readers to Appendix~\ref{sec:cuda_implementation} for the detailed algorithm of the optimizations above.

\begin{theorem}\label{them:complexity}
Given there are $K$ computation units in the GPU, the amortized update performance of \gpmaplus is $O(1+\frac{log^2N}{K})$,
where $N$ is the maximum number of edges in the dynamic graph. 
\end{theorem}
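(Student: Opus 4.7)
The plan is to bound the total amount of work \gpmaplus performs on a batch of $B$ updates, then divide by the available parallelism $K$ after accounting for the cost of the supporting GPU primitives. The amortization will be transferred from the sequential \pma analysis provided by Lemma \ref{lem:pma}.

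First, I would argue that the total re-dispatch work over all levels incurred by \gpmaplus while absorbing a batch of $B$ updates is no larger, up to constant factors, than the work of executing those $B$ updates sequentially in \pma. The justification is that \textsc{TryInsert+} touches a segment $s$ only when at least one queued entry falls inside $s$, and whenever a re-dispatch occurs its cost is linear in $\text{capacity}(s)$ --- exactly the quantity charged by the classical \pma analysis. Summing across all levels via the potential-function argument underlying Lemma \ref{lem:pma}, the total re-dispatch work is $O(B \log^2 N)$ in the worst case.

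Second, I would account for the auxiliary GPU primitives called in \textsc{GpmaPlusInsertion}: the initial \textsc{Sort} and \textsc{BinarySearchLeafSegments}, plus the per-level \textsc{RunLengthEncoding} and \textsc{ExclusiveScan}. Each of these costs $O(B \log B)$ aggregate work across the run, which is dominated by $B \log^2 N$ whenever $B \le N$. Standard CUB implementations of these primitives achieve nearly linear speedup on $K$ cores, so their contribution to the parallel cost is $O((B \log B)/K)$, again absorbed into the main term. Finally, because \gpmaplus processes all segments at a fixed level concurrently and every segment at a given level has identical capacity, the work per thread within a level is balanced up to a constant, and the warp/block/device-based variants of \textsc{TryInsert+} redistribute a segment of size $s$ with $\min(s,K)$-way parallelism. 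Combining all contributions gives a parallel cost of $O(B + B\log^2 N / K)$; dividing by $B$ yields the claimed amortized per-update bound of $O(1 + \log^2 N / K)$, where the leading $1$ also keeps the expression meaningful in the regime $K \gg \log^2 N$.

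The main obstacle will be formalizing that the segment re-dispatch work accrued in the batched setting truly matches the sequential \pma accounting when many queued insertions collapse into a single common ancestor. A naive reading might suggest \gpmaplus saves work relative to running the insertions one by one, which would break a direct reduction; conversely, one must check that batching does not redo work across levels. I would resolve both directions by tracking the same potential function used in the original \pma proof: each level's re-dispatch simultaneously releases enough potential to pay for its linear cost and restores the density invariants for the next level, so the $O(\log^2 N)$ credit per sequential insertion is sufficient to cover the aggregate batched re-dispatches. A secondary subtlety is that Lemma \ref{lem:pma} is an amortized bound over a sequence of updates rather than a single-batch worst case; I would handle this by stating the theorem as a long-run average across many batches, which matches the amortized guarantee we inherit.
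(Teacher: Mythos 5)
Your proposal follows essentially the same route as the paper's proof: decompose the cost into sorting/searching plus level-by-level insertion, charge the total batched re-dispatch work against the sequential \pma amortized bound of Lemma~\ref{lem:pma} (the paper justifies the key inequality $\sum_i p_i z_i \leq \sum_{x\in X} c^x_{\pma}$ by noting each segment touched by \gpmaplus would also be touched by a sequential run, which is the counting form of your potential-function argument), and divide by $K$. Your treatment is marginally more explicit about the per-level primitive overheads and the amortized-versus-worst-case reading of Lemma~\ref{lem:pma}, but the decomposition and the key reduction to sequential \pma are the same.
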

\begin{proof}
Let $X$ denote the set of updating entries contained in a batch. 
We consider the case where $|X| \geq K$ as it is rare to see $|X| < K$ in real world scenarios. 
In fact, our analysis works for cases where $|X| = O(K)$. 
The total update complexity consists of three parts: (1) sorting the updating entries;
(2) searching the position of the entries in \gpma; (3) inserting the entries.
We study these three parts separately.

For part (1), the sorting complexity of $|X|$ entries on the GPU is $O(\frac{|X|}{K})$
since parallel radix sort is used (keys in \gpma are integers for storing edges). Then, the amortized sorting complexity is $O(\frac{|X|}{K})/|X|=O(1)$. 

For part (2), the complexity of concurrently searching $|X|$ entries on \gpma is $O(\frac{|X|\cdot logN}{K})$ since each
entry is assigned to one thread and the depth of traversal is the same for one thread (\gpma is a balanced tree). 
Thus, the amortized searching complexity is $O(\frac{|X|\cdot logN}{K})/|X|=O(\frac{logN}{K})$. 

For part (3), we need to conduct a slightly complicated analysis. 
We denote the total insertion complexity of $X$ with \gpmaplus as $c^X_{\gpmaplus}$.
As \gpmaplus is updated level by level, 
$c^X_{\gpmaplus}$ can be decomposed into:
$c^X_{\gpmaplus} = c_0 + c_1 + ... + c_h$
where $h$ is the height of the \pma tree. 
\marked{
Given any level $i$, let $z_i$ denote the number of segments to be updated by \gpmaplus.
}
Since all segments at level $i$ have the same size, we denote $p_i$ as the sequential complexity to update any segment $s_{i,j}$ at level $i$ (\textsc{TryInsert+} in Algorithm~\ref{alg:pma2_bulk_update}).
\gpmaplus evenly distributes the computing resources to each segment.
As processing each segment only requires a constant number of scans on the segment by GPU primitives, 
the complexity for \gpmaplus to process level $i$ is $c_i = \frac{p_i \cdot z_i}{K}$. 
Thus we have:
\begin{align*}
c^X_{\gpmaplus}& = \sum_{i = 0,..,h}\frac{p_i \cdot z_i}{K} \leq \frac{1}{K}\sum_{x \in X}c^x_{\pma}
\end{align*}
where $c^x_{\pma}$ is the sequential complexity for \pma to process the update of a particular entry $x \in X$.
\marked{
The inequality holds because for each segment updated by \gpmaplus, it must be updated at least once by a sequential \pma process.
With Lemma~\ref{lem:pma}, we have $c^x_{\pma} = O(log^2 N)$ and thus
$c^X_{\gpmaplus} = O(\frac{|X|\cdot log^2 N}{K})$.
} 
Then the amortized complexity to update one
single entry under the \gpma scheme naturally follows as $O(1+\frac{log^2 N}{K})$.

Finally, we conclude the proof by combining the complexities from all three parts.
\end{proof}
\marked{

Theorem~\ref{them:complexity} proves that the speedups of \gpmaplus over sequential \pma
is linear to the number of processing units available on GPUs, which showcases the theoretical scalability of \gpmaplus.
}

\section{Experimental Evaluation}\label{sec:experiment}
In this section, we present the experimental evaluation
of our proposed methods. First, we present the setup of the experiments.
Second, we examine the update costs of different schemes for
maintaining dynamic graphs.
Finally, we implement
three different applications to show the performance and the scalability of the proposed solutions.

\begin{table*}
\centering
\caption{Experimented Graph Algorithms and the Compared Approaches}
\scriptsize
\label{tab:approaches}
\begin{tabular}{|c|c|c|c|c|}
\hline
Compared Approaches & Graph Container & \apbfs & \apcc & \appagerank \\ \hline
\multirow{3}{*}{CPU Approaches} & \adjlists & \multicolumn{3}{|c|}{\multirow{2}{*}{Standard Single Thread Algorithms}} \\ \cline{2-2}
& \pma \cite{Bender:2005:COB, Bender:2007:PMA} & \multicolumn{3}{|c|}{}  \\ \cline{2-5}
& \stinger \cite{Ediger:2012ve} & \multicolumn{3}{c|}{\stinger built-in Parallel Algorithms} \\ \hline
\multirow{2}{*}{GPU Approaches} & \gpucsr \cite{cuSparse} & \multirow{2}{*}{D. Merrill et al.\cite{Merrill:2015:HPS}} & \multirow{2}{*}{J. Soman et al.\cite{Soman:2010vm}} & \multirow{2}{*}{CUSP SpMV \cite{cusp}} \\ \cline{2-2}
& \gpma/\gpmaplus & & & \\ \hline
\end{tabular}
\end{table*}

\begin{figure*}[t]\centering
	\includegraphics[width=\linewidth]{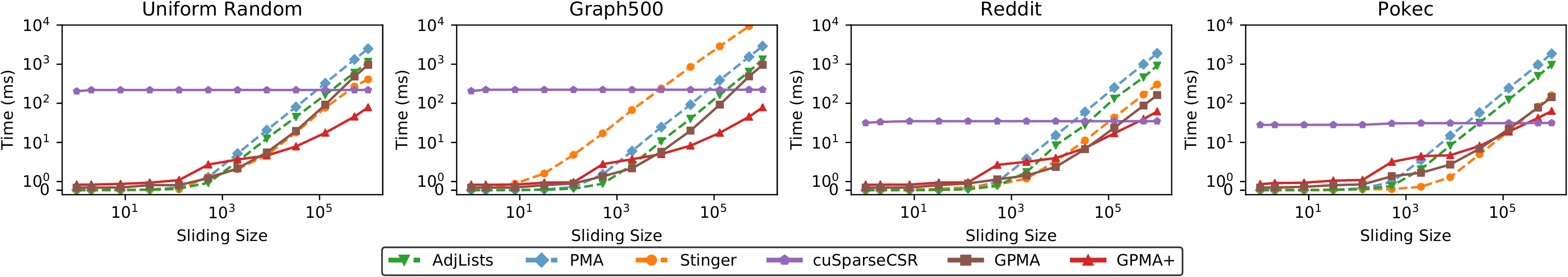}
	\caption{Performance comparison for updates with different batch sizes. The dashed lines represent CPU-based solutions whereas the solid lines represent GPU-based solutions.}
	\label{fig:pma_insert_compare}
\end{figure*}

\subsection{Experimental Setup}\label{sec:experiment:setup}

\noindent\textbf{Datasets.}
We collect two real world graphs (\dsreddit and \dspokec)
and synthesize two random graphs (\dsrandom and \dsgraphlib) to test the proposed methods.
The datasets are described as follows and their statistics are
summarized in Table \ref{tab:datasets}.
\begin{itemize}[noitemsep,leftmargin=*]
\item \dsreddit is an online forum where user actions
include post and comment. We collect all comment actions
from a public resource\footnote{https://www.kaggle.com/reddit/reddit-comments-may-2015}.
Each comment of a user $b$ to a post from another user $a$
is associated with an edge from $a$ to $b$, and the edge indicates an action of $a$ has triggered
an action of $b$. As each comment is labeled with a timestamp, it naturally
forms a dynamic influence graph.
\item \dspokec is the most popular online social network in Slovakia.
We retrieve the dataset from SNAP \cite{Leskovec:2016:SNAP}.
Unlike other online datasets, \dspokec contains the whole
network over a span of more than 10 years. Each edge corresponds to a friendship
between two users.
\item \dsgraphlib is a synthetic dataset obtained by using the Graph500 RMAT
generator \cite{Murphy:2010:ITG} to synthesize a large power law graph.
\item \dsrandom is a random graph generated by the Erd\H{o}s-Renyi model.
Specifically, given a graph with $n$ vertices, the random graph is generated by
including each edge with probability $p$. In our experiments, we generate a
Erd\H{o}s-Renyi random graph with $0.02\%$ of non-zero entries against a full clique.
\end{itemize}


\begin{table}[t]\centering
\caption{Statistics of Datasets}
\label{tab:datasets}
\begin{tabular}{|c|c|c|c|c|c|}
\hline
Datasets & ~~$|\vset|$~~ & $~~|\eset|~~$ & $|\eset|/|\vset|$ & $|\eset_s|$ & $|\eset_s|/|\vset|$\\ \hline
\dsreddit	& 2.61M	& 34.4M		& 13.2	& 17.2M	 & 6.6	\\ \hline
\dspokec 	& 1.60M	& 30.6M		& 19.1	& 15.3M  & 9.6 	\\ \hline
\dsgraphlib & 1.00M & 200M		& 200	& 100M	& 100	\\ \hline
\dsrandom 	& 1.00M	& 200M		& 200	& 100M	& 100	\\ \hline
\end{tabular}
\end{table}


\noindent\textbf{Stream Setup.}
In our datasets, \dsreddit has a timestamp on every edge whereas
the other datasets do not possess timestamps. As commonly used in existing graph stream algorithms \cite{zhang2016approximate,yang2016tracking,ohsaka2015efficient},
we randomly set the timestamps of all edges in the \dspokec, \dsgraphlib and
\dsrandom datasets.
Then, the graph stream of each dataset receives the edges with increasing timestamps.

For each dataset, a dynamic graph stream is initialized
with a subgraph consisting of the dataset's first half of its total edges according to the timestamps,
i.e., $\eset_s$ in Table~\ref{tab:datasets} denotes
the initial edge set of a dynamic graph before the stream starts.
To demonstrate the update performance of both insertions and deletions,
we adopt a sliding window setup where the window contains a fixed number of edges.
Whenever the window slides, we need to update the graph
by deleting expired edges and inserting arrived edges
until there are no new edges left in the stream.

\vspace{1mm}
\noindent\textbf{Applications.}
We conduct experiments on three most widely used graph applications to showcase the applicability and the efficiency of \gpmaplus.

\begin{itemize}[noitemsep,leftmargin=*]
	\item \textbf{BFS} is a key graph operation which is extensively studied in previous works on GPU graph processing \cite{harish2007accelerating, luo2010effective, busato2015bfs}.
It begins with a given vertex (or \emph{root})
of an unweighted graph and iteratively explores all connected vertices.
The algorithm will assign a minimum distance away from the root vertex to every visited vertex after
it terminates. In the streaming scenario, after each graph update,
we select a random root vertex and perform \apbfs from the root to explore the entire graph.
\item \textbf{Connected Component} is another fundamental algorithm which has been extensively studied 
under both CPU \cite{hirschberg1976parallel} and GPU \cite{Soman:2010vm}  µ environment.
It partitions the graph in the way that all vertices in a partition can reach the others in the same partition
and cannot reach vertices from other partitions. In the streaming context, 
after each graph update, we run the \apcc algorithm to maintain the up-to-date partitions.
\item \textbf{PageRank} is another popular benchmarking application for large scale graph processing.
Power iteration method is a standard method to evaluate the PageRank where the Sparse Matrix Vector
Multiplication (\spmv) kernel is recursively executed between the graph's adjacency matrix and the PageRank vector. In the streaming scenario, whenever
the graph is updated, the power iteration is invoked and it obtains the up-to-date PageRank vector
by operating on the updated graph adjacency matrix
and the PageRank vector obtained in the previous iteration.
In our experiments, we follow the standard setup by setting the damping factor to 0.85 and
we terminate the power iteration once the 1-norm error is less than $10^{-3}$.
\end{itemize}

These three applications have different memory and computation requirements. BFS requires little computation but performs frequent random memory accesses, 
and PageRank using \spmv accesses the memory sequentially and it is the most compute-intensive task among all three applications.

\vspace{1mm}
\noindent\textbf{Maintaining Dynamic Graph.}
We adopt the CSR \cite{Liu:2016:ICB,Ashari:2014:FSM} format to represent the dynamic graph
maintained. Note that all approaches proposed in the paper
are not restricted to CSR but general enough to incorporate
any popular representation formats like COO \cite{dang2012sliced}, JAD \cite{saad1989numerical}, HYB \cite{Bell:2008:ESM, martone2010use} and many others.
To evaluate the update performance of our proposed methods, we compare different graph data structures and respective approaches
on both CPUs and GPUs. 
\begin{itemize}[noitemsep,leftmargin=*]
\item \adjlists(CPU). \adjlists is a basic approach for CSR graph representation. As the CSR format sorts 
all entries according to their row-column indices, we implement \adjlists with a vector of $|V|$ entries for $|V|$ vertices
and each entry is a RB-Tree to denote all (out)neighbors of each vertex. 
The insertions/deletions are operated by TreeSet insertions/deletions. 
\item \pma (CPU). We implement the original CPU-based \pma and adopt it for the CSR format. 
The insertions/deletions are operated by \pma insertions/deletions. 
\item \stinger (CPU). We compare the graph container structure
used in the state-of-the-art CPU-based parallel dynamic graph analytic system, \stinger \cite{Ediger:2012ve}.
The updates are handled by the internal logic of \stinger.
\item \gpucsr (GPU). We also compare with the GPU-based CSR format used in the NVIDIA cuSparse library~\cite{cuSparse}. 
The updates are executed by calling the rebuild function in the cuSparse library. 
\item \gpma/\gpmaplus. \marked{They are our proposed approaches.
Although insertions and deletions could be handled similarly,
in the sliding window models where the numbers of insertions and deletions are often equal,
the lazy deletions can be performed via marking the location as deleted without triggering the density maintenance and recycling for new insertions.
}
\end{itemize}
%
Note that we do not compare with DCSR \cite{King:2016:DSM} because, as discussed in Section~\ref{sec:literature:analytics},
the scheme can neither handle deletions nor support efficient
searches, which makes it incomparable to all schemes proposed in this paper.

To validate if using the dynamic graph format proposed in this paper affects the performance of graph algorithms,
we implement the state-of-the-art GPU-based algorithms on the CSR format maintained by
\gpma/\gpmaplus as well as \gpucsr.
Meanwhile, we invoke \stinger's built-in APIs to handle the same workloads of the graph algorithms,
which are considered as the counterpart of GPU-based approaches in highly parallel CPU environment.
Finally, we implement the standard single-threaded algorithms for each application in \adjlists and \pma as baselines for thorough evaluation.
The details of all compared solutions for each application is summarized in Table~\ref{tab:approaches}.

\vspace{1mm}
\noindent\textbf{Experimental Environment.} All algorithms mentioned in the remaining part
of this section are implemented
with CUDA 7.5 and GCC 4.8.4 with -O3 optimization.
All experiments except \stinger run on a CentOS server which has Intel(R) Core i7-5820k (6-cores, 3.30GHz) with 64GB main memory and three GeForce TITAN X GPUs (each has 12GB device memory), connected with PCIe v3.0.
\stinger baselines run on a multi-core server which is deployed 4-way Intel(R) Xeon(R) CPU E7-4820 v3 (40-cores, 1.90GHz) with 128GB main memory.

\subsection{The Performance of Handling Updates}

\begin{figure*}[t]

\begin{minipage}{\linewidth}
	\includegraphics[width=\linewidth]{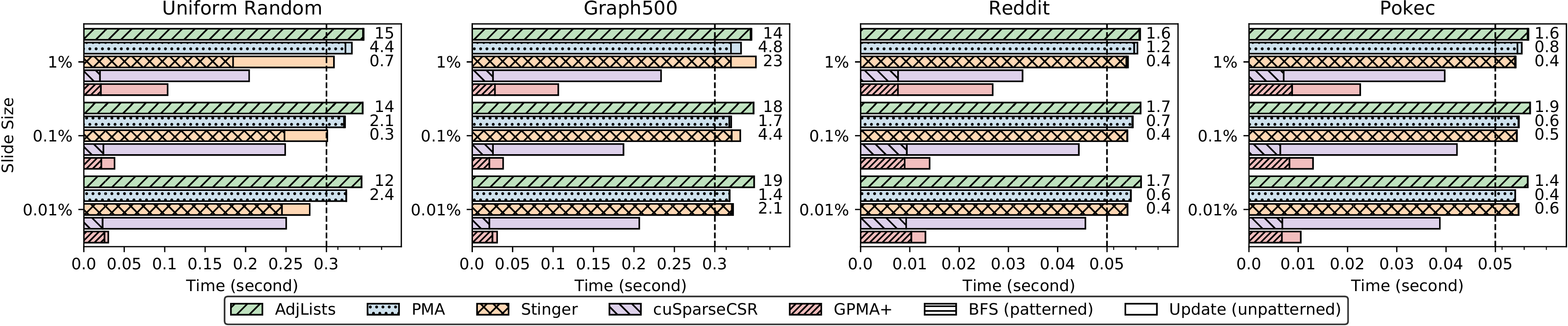}
	\caption{Streaming BFS}
	\label{fig:result_bfs}
\end{minipage}

\begin{minipage}{\linewidth}
	\includegraphics[width=\linewidth]{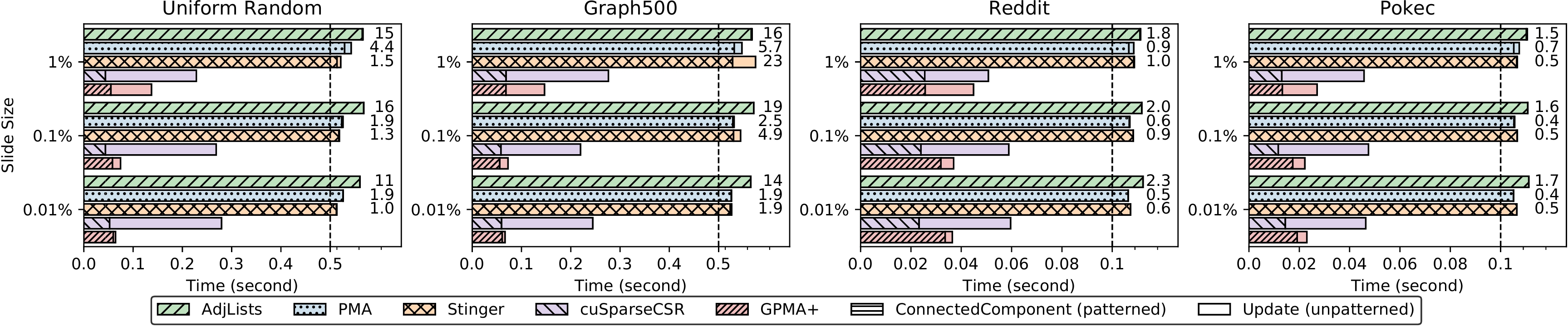}
	\caption{Streaming Connected Component}
	\label{fig:result_cc}
\end{minipage}

\begin{minipage}{\linewidth}
	\includegraphics[width=\linewidth]{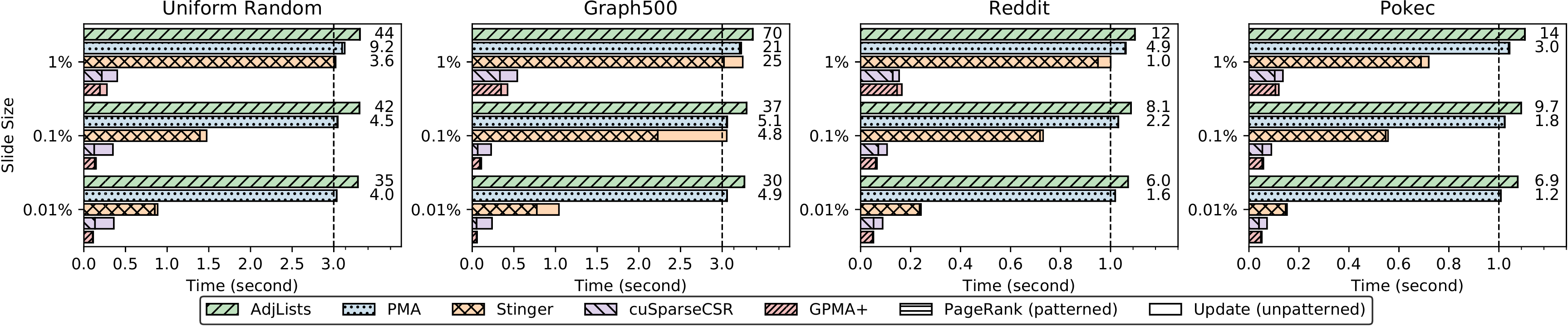}
	\caption{Streaming PageRank}
	\label{fig:result_pagerank}
\end{minipage}

\end{figure*}

\begin{figure*}[t]

\begin{minipage}{\linewidth}
	\includegraphics[width=\linewidth]{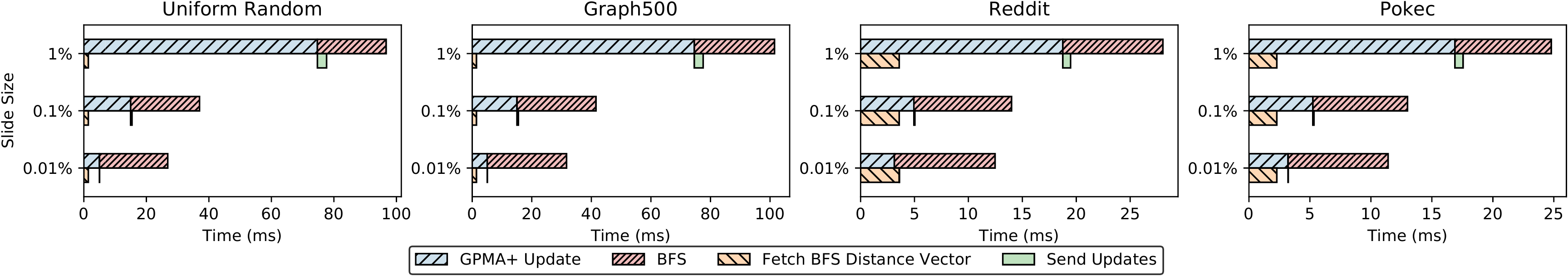}
	\caption{Concurrent data transfer and BFS computation with asynchronous stream}
	\label{fig:result_bfs_mem}
\end{minipage}

\begin{minipage}{\linewidth}
	\includegraphics[width=\linewidth]{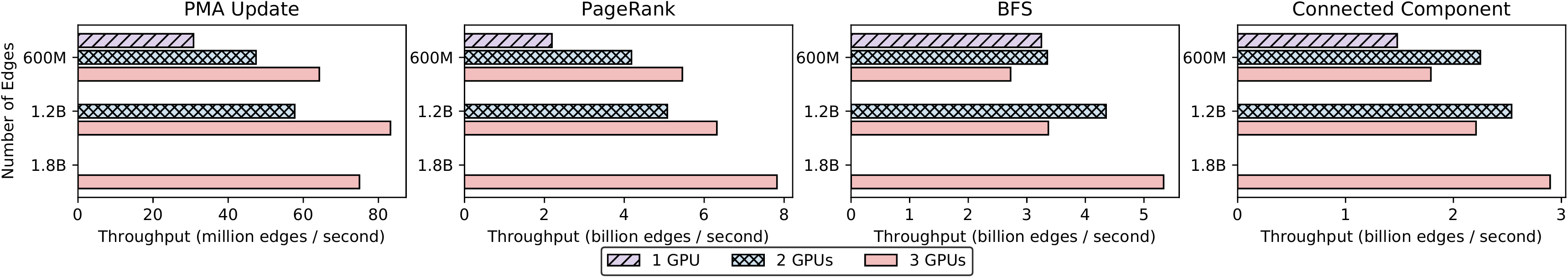}
	\caption{Multi-GPU performance on different sizes of \dsgraphlib datasets}
	\label{fig:multi-gpu}
\end{minipage}
\end{figure*}

In this subsection, we compare the update costs for different update approaches.
As previously mentioned, we start with the initial subgraph consisting of each dataset's first half of total edges.
We measure the average update time where the sliding window iteratively shifts for a batch of edges. 
To evaluate the impact of update batch sizes, the batch size is set to range from one edge and exponentially grow to one million edges with base two. 
Figure~\ref{fig:pma_insert_compare} shows the average latency for all approaches with different sliding batch sizes. 
Note that the x-axis and y-axis are plotted in log scales.
\marked{We have also tested sorted graph streams to evaluate extreme cases. We omit the detailed results and we refer interested readers to Appendix~\ref{sec:oreder_graphs_result}.}

We observe that, \pma-based approaches are very efficient in handling updates when the batch size is small.
As batch size becomes larger, the performance of \pma and \gpma quickly degrades to the performance of simple rebuild.
Although \gpma achieves better performance than \gpmaplus for small batches since the concurrent updating entries are unlikely to conflict,
thread conflicts become serious for larger batches.
Due to its lock-free characterstic, \gpmaplus shows superior performance over \pma and \gpma.
In particular, \gpmaplus has speedups of up to 20.42x and 18.30x against \pma and \gpma respectively.
\stinger shows impressive update performance in most cases as \stinger efficiently updates its dynamic graph structure in a parallel fashion and the code runs on a powerful multi-core CPU system.
For now, multi-core CPU system is considered more powerful than GPUs for pure random data structure maintainance but cost more (in our experimental setup, our CPU server costs more than 5 times that of the GPU server).
Moreover, we also note that, \stinger shows extremely poor performance in the \dsgraphlib dataset. 
According to the previous study \cite{bader2009stinger}, the phenomenon is due to the fact that \stinger holds a fixed size of each edge block.
Since \dsgraphlib is a heavily skewed graph as the graph follows the power law model,
the skewness causes severe performance deficiency on the utilization of memory for \stinger.


We observe the sharp increase for \gpmaplus performance curves occur when the batch size is $512$.
This is because the multi-level strategy is used in \gpmaplus (which is mentioned in Section~\ref{sec:gpmaplus-update}) 
and shared-memory constraint cannot support batch size which is more than 512 on our hardware.
Finally, the experiments show that, \gpma is faster than \gpmaplus when the update batch is smaller and leads to few thread conflicts, 
because the \gpmaplus logic is more complicated and includes overheads by a number of kernel calls. 
However, using \gpma only benefits when the update batch is extremely small and the performance gain in
such extreme case is also negligible compared with \gpmaplus. 
Hence, we can conclude that \gpmaplus shows its stability and efficiency across different update patterns compared with \gpma, 
and we will only show the results of \gpmaplus in the remaining experiments.

\subsection{Application Performance}\label{sec:experiment:app}
As previously mentioned, all compared application-specific approaches are summarized in Table~\ref{tab:approaches}.
We find that integrating \gpmaplus into an existing GPU-based implementation requires little modification. The main one is in transforming the array operations in the original implementation to the operations on \gpmaplus, as presented in Section~\ref{sec:generality}.
The intentions of this subsection are two-fold.
First, we test if using the \pma-like data structure to represent the graph brings significant overheads for the graph algorithms.  
Second, we demonstrate how the update performance affects the overall efficiency of dynamic graph processing.


In the remaining part of this section, we present the performance of different approaches by showing
their average elapsed time to process a shift of the sliding window with three different batch sizes, 
i.e., the batches contain $0.01\%$, $0.1\%$ and $1\%$ edges of the respective dataset.
We have also tested the graph stream with explicit random insertions and deletions for all applications as an extended experiment. We omit the detailed results here since they are similar to the results of the sliding window model and we refer interested readers to Appendix~\ref{sec:explicit_delete_expr}. 
We distinguish the time spent on
updates and analytics with different patterns among all figures.

\vspace{1mm}\noindent


\noindent\textbf{\emph{BFS Results:}}
Figure~\ref{fig:result_bfs} presents the results for \apbfs.
Although processing \apbfs only accesses each edge in the graph once,
it is still an expensive operation because \apbfs can potentially scan the entire graph.
This has led to the observation that CPU-based approach takes significant amount of time for BFS computation
whereas the update time is comparatively negligible.
Thanks to the massive parallelism and high memory bandwidth of GPUs, GPU-based approaches are much more efficient
than CPU-based approaches for BFS computation as well as the overall performance. 
For the \gpucsr approach, the rebuild process is the bottleneck as the update needs to scan the entire group multiple times. 
In contrast, \gpmaplus takes much shorter time for the update and has nearly identical BFS performance compared with \gpucsr. 
Thus, \gpmaplus dominates the comparisons in terms of the overall processing efficiency. 



We have also tested our framework in terms of hiding data transfer over PCIe by using asynchronous streams to concurrently
perform GPU computation and PCIe transfer. In Figure~\ref{fig:result_bfs_mem}, we show the results
when running concurrent execution by using the \gpmaplus approach.
The data transfer consists of two parts: sending graph updates and fetching updated distance vector (from the query vertex to all other vertices).
It is clear from the figure that, under any circumstances,
sending graph updates is overlapped by \gpmaplus update processing and
fetching the distance vector is overlapped by \apbfs computation.
Thus, the data transfer is completely hidden in the concurrent streaming scenario.
As the observations remain similar in other applications, we omit their results and explanations, and the details can be found in Appendix~\ref{sec:async_data_transfer}.

\vspace{1mm}\noindent


\noindent\textbf{\emph{Connected Component Results:}}
Figure~\ref{fig:result_cc} presents the results for running {\tt Connected Component} on the dynamic graphs.
The results show different performance patterns compared with \apbfs as
\apcc takes more time in processing which is caused by a number of graph traversal passes to extract the partitions.  
Meanwhile, the update cost remains the same. 
Thus, GPU-based solutions enhance their performance superiority over CPU-based solutions. 
Nevertheless, the update process of \gpucsr is still expensive compared with the time spent on
\texttt{Connected- Component}. 
\gpmaplus is very efficient in processing the updates. Although we have observed that,
in the \dsreddit and the \dspokec datasets, \gpmaplus shows some discrepancies for
running the graph algorithm against \gpucsr due to the ``holes'' introduced in the graph structure, 
the discrepancies are insignificant considering the huge performance boosts for updates. 
Thus, \gpmaplus still dominates the rebuild approach for overall performance. 

\vspace{1mm}\noindent


\noindent\textbf{\emph{PageRank Results:}}
Figure~\ref{fig:result_pagerank} presents the results for \texttt{Page- Rank}.
\appagerank is a compute-intensive task where the \spmv kernel is iteratively invoked on the entire graph until the PageRank vector converges.
The pattern follows from previous results:
CPU-based solutions are dominated by GPU-based approaches because
iterative \spmv is a more expensive process than \apbfs and \apcc, and
GPU is designed to handle massively parallel computation like \spmv.
Although \gpucsr shows inferior performance compared with \gpmaplus, 
the improvement brought by \gpmaplus's efficient update is not as significant as that in previous applications since
the update costs are small compared with the cost of iterative \spmv kernel calls. 
Nevertheless, the dynamic structure of \gpmaplus does not affect the efficiency of the \spmv kernel
and \gpmaplus outperforms other approaches in all experiments.

\marked{
\subsection{Scalability}
\gpma and \gpmaplus can also be extended to multiple GPUs to support graphs with size larger than the device memory of one GPU.
To showcase the scalability of our proposed framework, we implement the multi-GPU version of \gpmaplus and carry out experiments of the aforementioned graph applications.

We generate three large datasets using \dsgraphlib with increasing numbers of edges (600 Million, 1.2 Billion and 1.8 Billion) and 
conduct the same performance experiments in section~\ref{sec:experiment:app} with 1\% slide size, on 1, 2 and 3 GPUs respectively. 
We evenly partition graphs according to the vertex index and synchronize all devices after each iteration. 
For fair comparison among different datasets, we use throughput as our performance metric.
The experimental results of \gpmaplus updates and application performance are illustrated in Figure~\ref{fig:multi-gpu}.
We do not compare with \stinger because in this subsection, we focus on the evaluation on the scalability of \gpmaplus. The memory consumption of \stinger exceeds our machine's 128GB main memory based on its default configuration in the standalone mode. 

Multiple GPUs can extend the memory capacity so that 
analytics on larger graphs can be executed. 
According to Figure~\ref{fig:multi-gpu}, 
the improvement in terms of throughput for multiple GPUs behaves differently in various applications. 
For \gpmaplus update and \appagerank, we achieve a significant improvement with more GPUs, because their workloads between communications are relatively compute-intensive.
For \apbfs and \apcc, the experimental results demonstrate a tradeoff between overall computing power and communication cost with increasing number of GPUs, 
as these two applications incur larger communication cost. 
Nevertheless, multi-GPU graph processing is an emerging research area and more effectiveness optimizations are left as future work.
Overall, this set of preliminary experiments shows that our proposed scheme is capable of supporting large scale dynamic graph analytics.
}
\subsection{Overall Findings}
We summarize our findings in the subsection. First, GPU-based approaches (\gpucsr and \gpmaplus)
outperform CPU-based approaches thanks to our optimizations in taking advantage of the superior hardware of the GPUs, even compared with \stinger running on a 40-core CPU server.
One of the key reasons is that \gpmaplus and graph analytics can exploit the superb high memory bandwidth and massive parallelism of the GPU, as many graph applications are data- and compute-intensive.
Second, \gpmaplus is much more efficient than \gpucsr as maintaining the dynamic updates is often the bottleneck of continuous graph analytic processing
and \gpmaplus avoids the costly process of rebuilding the graph via incremental updates while bringing minimal overheads for existing graph algorithms running its
graph structure.


\section{Conclusion \& Future Work}\label{sec:conclusion}

In this paper, we address how to dynamically update the graph structure on GPUs in an efficient manner. 
First, we introduce a GPU dynamic graph analytic framework, 
which enables existing static GPU-oriented graph algorithms to support high-performance evolving graph analytics. 
Second, to avoid the rebuild of the graph structure which is a bottleneck for processing dynamic graphs on GPUs, 
we propose \gpma and \gpmaplus to support incremental dynamic graph maintenance in parallel.
We prove the scalability and complexity of \gpmaplus theoretically and evaluate the efficiency through extensive experiments. 
\marked{As the future work, we would like to explore the hybrid CPU-GPU supports for dynamic graph processing and more effectiveness optimizations for involved applications.}

\clearpage
\clearpage
\appendix
\section*{Appendices}
\addcontentsline{toc}{section}{Appendices}
\renewcommand{\thesubsection}{\Alph{subsection}}

\subsection{\textsc{TryInsert+} Optimizations}\label{sec:cuda_implementation}

Based on different segment sizes, we propose three optimizations of Function \textsc{TryInsert+} in Algorithm~\ref{alg:pma2_bulk_update}. 
The motivation is to obtain better memory access efficiency and lower cost of synchronization by balancing between problem scale and hardware hierarchy on GPU.
The key computation logic of \textsc{TryInsert+} is to merge two sorted arrays, i.e., existing segment entries and entries to be inserted. 
Standard approach for parallel merging needs to identify the position in merged array by binary search and then to execute \emph{parallel map},
which requires heavy and uncoalesced memory accesses.
Thus, depending on the size of the merge,
we wish to employ different hardware hierarchies on GPU (i.e. warp, block and device) to minimize the cost of memory accesses.

Before presenting the details of our optimizations, 
Algorithm~\ref{alg:try_insert_init} illustrates how to group threads according to their positions in different hierarchies of GPU architecture
and how to target the groups to their assigned segments.
In particular, each thread is assigned with a lane id, a block id and a global thread id 
to indicate the position of the thread in the corresponding warp, block and device work group.
Each thread is assigned for one \gpmaplus segment and the thread will ask other threads in the same work group to cooperate for its task. 
This means that each thread tries to drive a group of threads to deal with the assigned segment.
Such a strategy lifts thread divergences caused by different execution branches.
Note that this assignment policy will be used in our warp and block based optimizations as an initialization function. 

Algorithm~\ref{alg:try_insert_warp} shows the Warp-Based optimization for any segments with entries no larger than the warp size. 
This implementation has high efficiency because explicit synchronization is not needed and all data is stored in registers. 
For each iteration, all threads of a particular warp will compete for the control of the warp as shown in line 11. 
The winner will drive the other threads in this warp to handle its required computation steps of the corresponding segment.
As an example, line 27 counts valid entries in the segment concurrently. 
Lines 32-34 omit the remaining computation steps in \textsc{TryInsert+}, such as merging insertions and redistributing entries of segments, 
as their computation paradigm is similar to counting entries.

Algorithm~\ref{alg:try_insert_block} shows the Block-Based optimization. 
It utilizes the shared memory, which has a higher volume than registers, to store data. 
Even though explicit synchronization is needed in line 12 and line 32 to guarantee consistency,
synchronization in a block is highly optimized in GPU hardware and thus it does little effect to the overall performance.
Both Warp-Based and Block-Based optimizations explicitly accommodate GPU features. 
As discussed in Section~\ref{sec:gpmaplus-update}, although these two methods have limited memory for efficient access, 
they can handle most of the update requests.

Algorithm~\ref{alg:try_insert_device} shows the Device-Based implementation. 
The implementation is different from the ones in Warp and Block based approaches, 
because it is designed for segments having a size larger than the shared memory size. 
Under this scenario, we have to handle them in the GPU's global memory. 
One possible approach is to invoke an independent kernel for each large segment, 
but it will take considerable costs to initialize and schedule for multiple kernels. 
Hence, we propose an approach to handle a large number segments by only invoking 
a few unique kernel calls.

We illustrate the idea by showing how to perform counting segments which are valid for insertions as an example. 
As shown in lines 5 and 12, all valid entries stored in \gpmaplus segments are first marked, and then all valid entry counts are calculated by \emph{SumReduce} in one kernel call. Line 16 generates valid indexes for segments which have enough free space to receive their corresponding insertions, which is used by the rest computation steps. 
Simply speaking, our approach executes in horizontal steps of the execution logic, in order to avoid load imbalance caused by branch divergences.
Finally, merging and segment entries redistribution use the same mechanism and thus are omitted.

\clearpage

\begin{algorithm}[t]
\begin{lstlisting}[basicstyle=\scriptsize\ttfamily, language=c++, numbers=left]   
inline function ConstInit( void ) {
  // cuda protocol variables
  WARPS = blockDim / 32;
  warp_id = threadIdx / 32;
  lane_id = threadIdx % 32;
  thread_id = threadIdx;
  block_width = gridDim;
  grid_width = gridDim * blockDim;
  global_id = block_width * blockIdx + threadIdx;
  
  // infos for assigned segment
  seg_beg = segments[global_id];
  seg_end = seg_beg + segment_width;
  
  // infos for insertions belong current segment
  ins_beg = offset[global_id];
  ins_end = offset[global_id + 1];
  insert_size = ins_end - ins_beg;
  
  // the upper number that current segment can hold
  upper_size = tau * segment_size;
}
\end{lstlisting}
\caption{\textsc{TryInsert+} Initialization}
\label{alg:try_insert_init}
\end{algorithm}

\begin{algorithm}[b] 
\begin{lstlisting}[basicstyle=\scriptsize\ttfamily, language=c++, numbers=left]         
kernel TryInsert+(int segments[m], int offsets[m],
    int insertions[n], int segment_width) {
  
  ConstInit();

  volatile shared comm[WARPS][5];
  warp_shared_register pma_buf[32]; 
  
  while (WarpAny(seg_end - seg_beg)) {
    // compete for warp	
    comm[warp_id][0] = lane_id;
    
    // winner controls warp in this iteration
    if (comm[warp_id][0] == lane_id) {
      seg_beg = seg_end;
      comm[warp_id][1] = seg_beg;
      comm[warp_id][2] = seg_end;
      comm[warp_id][3] = ins_beg;
      comm[warp_id][4] = ins_end;
    }
    
    memcpy(pma_buf, pma[seg_beg], segment_width);
    // count valid entries in this segment
    entry_num = 0;
    if (lane_id < segment_width) {
      valid = pma_buf[lane_id] == NULL ? 0 : 1;
      entry_num = WarpReduce(valid);
    }
    
    // check upper density if insert
    if (entry_num + insert_size) < upper_size) {
      // merge insertions with pma_buf
      // evenly redistribute pma_buf
      // mark all insertions successful
      memcpy(pma[seg_beg], pma_buf, segment_width);
    }
  }  
}	
\end{lstlisting}
\caption{\textsc{TryInsert+} Warp-Based Optimization}
\label{alg:try_insert_warp}
\end{algorithm}

\begin{algorithm}[t]  
\begin{lstlisting}[basicstyle=\scriptsize\ttfamily, language=c++, numbers=left]             
kernel TryInsert+(int segments[m], int offsets[m],
    int insertions[n], int segment_width) {
	
  ConstInit();

  volatile shared comm[5];
  volatile shared pma_buf[segment_width]; 
  
  while (BlockAny(seg_end - seg_beg)) {
    // compete for block	
    comm[0] = thread_id;
    BlockSynchronize();
    
    // winner controls block in this iteration
    if (comm[0] == lane_id) {
      seg_beg = seg_end;
      comm[1] = seg_beg;
      comm[2] = seg_end;
      comm[3] = ins_beg;
      comm[4] = ins_end;
    }
    
    memcpy(pma_buf, pma[seg_beg], segment_width);
    // count valid entries in this segment
    entry_num = 0;
    ptr = thread_id;
    while (ptr < segment_width) {
      valid = pma_buf[ptr] == NULL ? 0 : 1;
      entry_num += BlockReduce(valid);
      thread_id += block_width;
    }
    BlockSynchronize();
    
    // same as lines 30-37 in Algorithm 5
  }  
}
\end{lstlisting}
\caption{\textsc{TryInsert+} Block-Based Optimization} 
\label{alg:try_insert_block}
\end{algorithm}

\begin{algorithm}[b]
\begin{lstlisting}[basicstyle=\scriptsize\ttfamily, language=c++, numbers=left]                    
function TryInsert+(int segments[m], int offsets[m],
    int insertions[n], int segment_width) {
    
  int valid_flags[m * segment_width];
  parallel foreach i in range(m):
    parallel foreach j in range(segment_width):
      if (pma[segments[i] + j] != NULL) {
        valid_flags[i * segment_width + j] = 1;
      }   
  DeviceSynchronize();
  int entry_nums[m];
  DeviceSegmentedReduce(valid_flags, m,
      segment_size, entry_nums);
  DeviceSynchronize();
  int valid_indexes[m];
  parallel foreach i in range(m):
    if (entry_nums[i] + insert_size < upper_size) {
      valid_indexes[i] = i;
    }
  DeviceSynchronize();
  RemoveIfTrue(valid_indexes);
  DeviceSynchronize();
  // according to valid_indexes, segmentedly to:
  //   merge insertions into segments
  //   evenly redistribute segments
  //   mark all insertions successful 
}
\end{lstlisting}
\caption{\textsc{TryInsert+} Device-Based Optimization} 
\label{alg:try_insert_device}
\end{algorithm}

\clearpage
\subsection{Additional Experimental Results For Data Transfer}\label{sec:async_data_transfer}
We show the experimental results for using asynchronous streams
for concurrently transmitting data on PCIe and running computations on the GPU.
We only show the results for \gpmaplus.

In \apcc, the data transferred on PCIe consists of two parts: the graph updates 
and the component label vector to all vertices computed by \apcc. In \appagerank, the result vector to be fetched indicates PageRank scores, which has the same size as \apcc's. 
The results in Figures~\ref{fig:result_cc_mem} and \ref{fig:result_pagerank_mem}
have shown that the data transfer is completely hidden by analytic processing on GPU
and \gpmaplus update. 

\begin{figure}[ht]
\centering
\includegraphics[width=\linewidth]{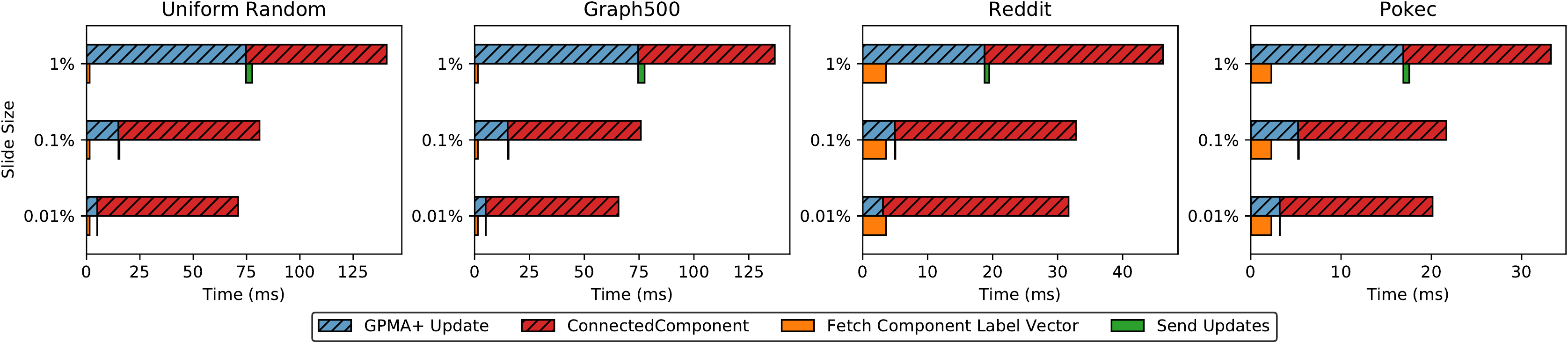}
\caption{Concurrent data transfer and Connected Component computation with asynchronous stream}
\label{fig:result_cc_mem}
\end{figure}

\begin{figure}[ht]
\centering
\includegraphics[width=\linewidth]{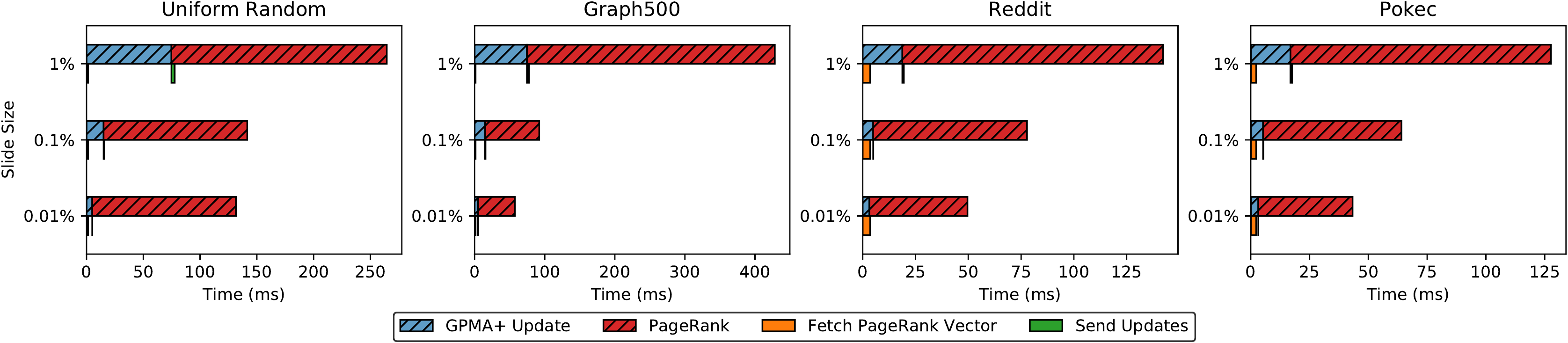}
\caption{Concurrent data transfer and PageRank computation with asynchronous stream}
\label{fig:result_pagerank_mem}
\end{figure}

\subsection{The Performance of Handling Updates on Sorted Graphs}\label{sec:oreder_graphs_result}
\begin{figure*}[h]\centering
	\includegraphics[width=\linewidth]{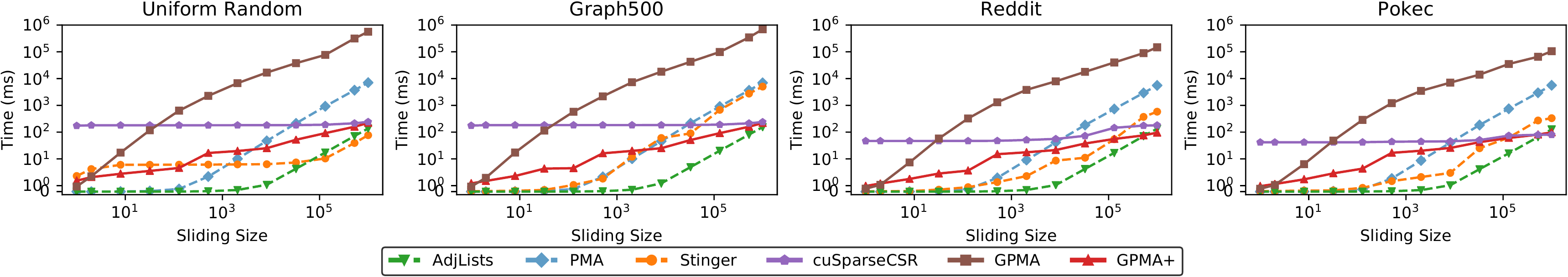}
	\caption{Performance comparison for updates with different batch sizes. The dashed lines represent CPU-based solutions whereas the solid lines represent GPU-based solutions.}
	\label{fig:pma_insert_compare_pos}
\end{figure*}

For the update results with sorted streaming orders,
\adjlists performs the best among all approaches due to its efficient balanced binary tree update mechanism. 
Meanwhile, a batch of sorted updates makes \gpma very inefficient as all updating threads within the batch conflict. 
Thanks to the non-locking optimization introduced, the update performance of \gpmaplus is still significantly faster than that of the rebuild approach (\gpucsr)
with orders of magnitude speedups for small batch sizes.

\clearpage
\subsection{Additional Experimental Results For Graph Streams with Explicit Deletions}\label{sec:explicit_delete_expr}
We present the experimental results for graph streams which involve explicit deletions. In this section, we use the same stream setup which is mentioned in Section~\ref{sec:experiment:setup}. However, for each iteration of sliding, we will randomly pick a set of edges belonging to current sliding window insteading of the head part as edges to be deleted.

\begin{figure}[ht]
\centering
\includegraphics[width=\linewidth]{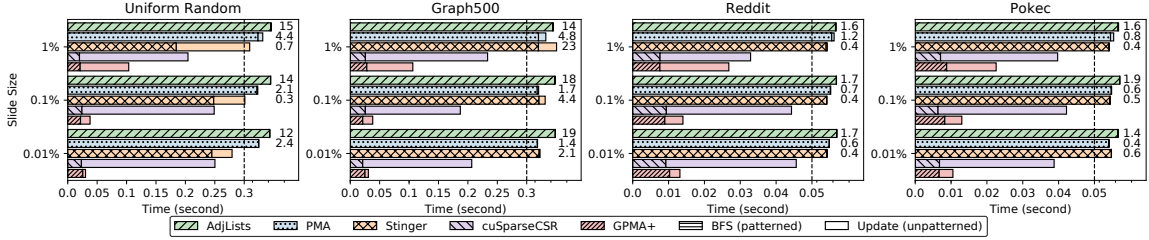}
\caption{Streaming BFS with explicit deletions}
\label{fig:result_bfs_r}
\end{figure}

\begin{figure}[ht]
\centering
\includegraphics[width=\linewidth]{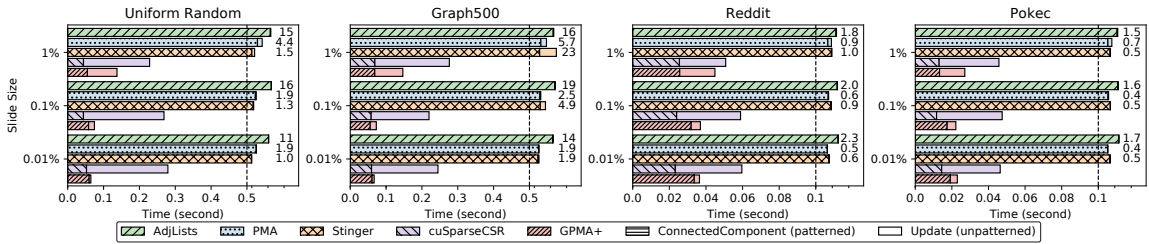}
\caption{Streaming Connected Component with explicit deletions}
\label{fig:result_cc_r}
\end{figure}

\begin{figure}[ht]
\centering
\includegraphics[width=\linewidth]{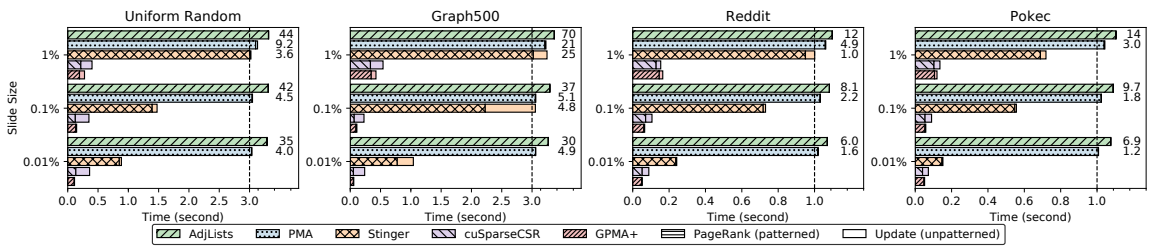}
\caption{Streaming PageRank with explicit deletions}
\label{fig:result_pagerank_r}
\end{figure}

Figures~\ref{fig:result_bfs_r}, \ref{fig:result_cc_r} and \ref{fig:result_pagerank_r} illustrate the results of three streaming applications respectively. Note that we pick sets of edges to be deleted in advance, which means that for each independent baseline, it handles the same workload all the time. Since there is no intrinsic difference between expiry and explicit deletions, the results are similar to sliding window's.
The subtle difference in the results are mainly due to different deletions which lead to various applications' running time.

\clearpage
\bibliographystyle{abbrv}
\bibliography{reference}

\begin{thebibliography}{10}

\bibitem{Flink}
Apache flink.
\newblock \url{https://flink.apache.org/}.
\newblock Accessed: 2016-10-18.

\bibitem{cusp}
Cusp library.
\newblock \url{https://developer.nvidia.com/cusp}.
\newblock Accessed: 2017-03-25.

\bibitem{cuSparse}
cusparse.
\newblock \url{https://developer.nvidia.com/cusparse}.
\newblock Accessed: 2016-11-09.

\bibitem{CUB}
{CUDA} {UnBound} ({CUB}) library.
\newblock \url{https://nvlabs.github.io/cub/}, 2015.

\bibitem{Akoglu:2015:GBA}
L.~Akoglu, H.~Tong, and D.~Koutra.
\newblock Graph based anomaly detection and description: a survey.
\newblock {\em Data Min. Knowl. Discov.}, 29(3):626--688, 2015.

\bibitem{Ashari:2014:FSM}
A.~Ashari, N.~Sedaghati, J.~Eisenlohr, S.~Parthasarathy, and P.~Sadayappan.
\newblock Fast sparse matrix-vector multiplication on gpus for graph
  applications.
\newblock In {\em SC}, pages 781--792, 2014.

\bibitem{ashari2014efficient}
A.~Ashari, N.~Sedaghati, J.~Eisenlohr, and P.~Sadayappan.
\newblock An efficient two-dimensional blocking strategy for sparse
  matrix-vector multiplication on gpus.
\newblock In {\em ICS}, pages 273--282, 2014.

\bibitem{bader2009stinger}
D.~A. Bader, J.~Berry, A.~Amos-Binks, D.~Chavarr{\'\i}a-Miranda, C.~Hastings,
  K.~Madduri, and S.~C. Poulos.
\newblock Stinger: Spatio-temporal interaction networks and graphs (sting)
  extensible representation.
\newblock {\em Georgia Institute of Technology, Tech. Rep}, 2009.

\bibitem{Bell:2008:ESM}
N.~Bell and M.~Garland.
\newblock Efficient sparse matrix-vector multiplication on {CUDA}.
\newblock Technical Report NVR-2008-004, NVIDIA Corporation, 2008.

\bibitem{Bender:2005:COB}
M.~A. Bender, E.~D. Demaine, and M.~Farach-Colton.
\newblock Cache-oblivious b-trees.
\newblock {\em SIAM J. Comput.}, 35(2):341--358, 2005.

\bibitem{Bender:2007:PMA}
M.~A. Bender and H.~Hu.
\newblock An adaptive packed-memory array.
\newblock {\em {ACM} Trans. Database Syst.}, 32(4), 2007.

\bibitem{Braun:2015:AMH}
L.~Braun, T.~Etter, G.~Gasparis, M.~Kaufmann, D.~Kossmann, D.~Widmer,
  A.~Avitzur, A.~Iliopoulos, E.~Levy, and N.~Liang.
\newblock Analytics in motion: High performance event-processing and real-time
  analytics in the same database.
\newblock In {\em SIGMOD}, pages 251--264, 2015.

\bibitem{busato2015bfs}
F.~Busato and N.~Bombieri.
\newblock Bfs-4k: an efficient implementation of bfs for kepler gpu
  architectures.
\newblock {\em TPDS}, 26(7):1826--1838, 2015.

\bibitem{Cheng:2012:KTP}
R.~Cheng, J.~Hong, A.~Kyrola, Y.~Miao, X.~Weng, M.~Wu, F.~Yang, L.~Zhou,
  F.~Zhao, and E.~Chen.
\newblock Kineograph: Taking the pulse of a fast-changing and connected world.
\newblock In {\em EuroSys}, pages 85--98, 2012.

\bibitem{crouch2013dynamic}
M.~S. Crouch, A.~McGregor, and D.~Stubbs.
\newblock Dynamic graphs in the sliding-window model.
\newblock In {\em European Symposium on Algorithms}, pages 337--348. Springer,
  2013.

\bibitem{dang2012sliced}
H.-V. Dang and B.~Schmidt.
\newblock The sliced coo format for sparse matrix-vector multiplication on
  cuda-enabled gpus.
\newblock {\em Procedia Computer Science}, 9:57--66, 2012.

\bibitem{datar2002maintaining}
M.~Datar, A.~Gionis, P.~Indyk, and R.~Motwani.
\newblock Maintaining stream statistics over sliding windows.
\newblock {\em SIAM journal on computing}, 31(6):1794--1813, 2002.

\bibitem{davidson2014work}
A.~Davidson, S.~Baxter, M.~Garland, and J.~D. Owens.
\newblock Work-efficient parallel gpu methods for single-source shortest paths.
\newblock In {\em Parallel and Distributed Processing Symposium, 2014 IEEE 28th
  International}, pages 349--359. IEEE, 2014.

\bibitem{Ediger:2012ve}
D.~Ediger, R.~McColl, E.~J. Riedy, and D.~A. Bader.
\newblock {STINGER - High performance data structure for streaming graphs.}
\newblock {\em HPEC}, 2012.

\bibitem{lkin:2011:SFD}
M.~Elkin.
\newblock Streaming and fully dynamic centralized algorithms for constructing
  and maintaining sparse spanners.
\newblock {\em ACM Trans. Algorithms}, 7(2):20:1--20:17, 2011.

\bibitem{Feigenbaum:2005:OGP}
J.~Feigenbaum, S.~Kannan, A.~McGregor, S.~Suri, and J.~Zhang.
\newblock On graph problems in a semi-streaming model.
\newblock {\em Theor. Comput. Sci.}, 348(2-3):207--216, 2005.

\bibitem{Fu:2014fy}
Z.~Fu, M.~Personick, and B.~Thompson.
\newblock {\em MapGraph: A High Level API for Fast Development of High
  Performance Graph Analytics on GPUs}.
\newblock A High Level API for Fast Development of High Performance Graph
  Analytics on GPUs. ACM, New York, New York, USA, June 2014.

\bibitem{Guha:2012:GSS}
S.~Guha and A.~McGregor.
\newblock Graph synopses, sketches, and streams: {A} survey.
\newblock {\em {Proc. VLDB Endow.}}, 5(12):2030--2031, 2012.

\bibitem{harish2007accelerating}
P.~Harish and P.~Narayanan.
\newblock Accelerating large graph algorithms on the gpu using cuda.
\newblock In {\em International Conference on High-Performance Computing},
  pages 197--208. Springer, 2007.

\bibitem{hirschberg1976parallel}
D.~S. Hirschberg.
\newblock Parallel algorithms for the transitive closure and the connected
  component problems.
\newblock In {\em Proceedings of the eighth annual ACM symposium on Theory of
  computing}, pages 55--57. ACM, 1976.

\bibitem{Iyer:2016:TGP}
A.~P. Iyer, L.~E. Li, T.~Das, and I.~Stoica.
\newblock Time-evolving graph processing at scale.
\newblock In {\em Proceedings of the Fourth International Workshop on Graph
  Data Management Experiences and Systems}, pages 5:1--5:6, 2016.

\bibitem{Iyer:2015:RTC}
A.~P. Iyer, L.~E. Li, and I.~Stoica.
\newblock Celliq : Real-time cellular network analytics at scale.
\newblock In {\em NSDI}, pages 309--322, 2015.

\bibitem{kaleem2016synchronization}
R.~Kaleem, A.~Venkat, S.~Pai, M.~Hall, and K.~Pingali.
\newblock Synchronization trade-offs in gpu implementations of graph
  algorithms.
\newblock In {\em Parallel and Distributed Processing Symposium, 2016 IEEE
  International}, pages 514--523. IEEE, 2016.

\bibitem{King:2016:DSM}
J.~King, T.~Gilray, R.~M. Kirby, and M.~Might.
\newblock Dynamic sparse-matrix allocation on gpus.
\newblock In {\em ISC}, pages 61--80, 2016.

\bibitem{Leskovec:2016:SNAP}
J.~Leskovec and R.~Sosi{\v{c}}.
\newblock Snap: A general-purpose network analysis and graph-mining library.
\newblock {\em TIST}, 8(1):1, 2016.

\bibitem{Lin:2014:ESM}
X.~Lin, R.~Zhang, Z.~Wen, H.~Wang, and J.~Qi.
\newblock Efficient subgraph matching using gpus.
\newblock In {\em ADC}, pages 74--85, 2014.

\bibitem{Liu:2016:ICB}
H.~Liu, H.~H. Huang, and Y.~Hu.
\newblock ibfs: Concurrent breadth-first search on gpus.
\newblock In {\em SIGMOD}, pages 403--416, 2016.

\bibitem{luo2010effective}
L.~Luo, M.~Wong, and W.-m. Hwu.
\newblock An effective gpu implementation of breadth-first search.
\newblock In {\em DAC}, pages 52--55, 2010.

\bibitem{martone2010use}
M.~Martone, S.~Filippone, S.~Tucci, P.~Gepner, and M.~Paprzycki.
\newblock Use of hybrid recursive csr/coo data structures in sparse
  matrix-vector multiplication.
\newblock In {\em IMCSIT}, pages 327--335. IEEE, 2010.

\bibitem{McGregor:2014:GSA}
A.~McGregor.
\newblock Graph stream algorithms: A survey.
\newblock {\em SIGMOD Rec.}, 43(1):9--20, 2014.

\bibitem{Merrill:2015:HPS}
D.~Merrill, M.~Garland, and A.~Grimshaw.
\newblock {High-Performance and Scalable GPU Graph Traversal}.
\newblock {\em TOPC}, 1(2), 2015.

\bibitem{Murphy:2010:ITG}
R.~C. Murphy, K.~B. Wheeler, B.~W. Barrett, and J.~A. Ang.
\newblock Introducing the graph 500.
\newblock 2010.

\bibitem{ohsaka2015efficient}
N.~Ohsaka, T.~Maehara, and K.-i. Kawarabayashi.
\newblock Efficient pagerank tracking in evolving networks.
\newblock In {\em KDD}, pages 875--884, 2015.

\bibitem{saad1989numerical}
Y.~Saad.
\newblock Numerical solution of large nonsymmetric eigenvalue problems.
\newblock {\em Computer Physics Communications}, 53(1):71--90, 1989.

\bibitem{TweetStats}
D.~Sayce.
\newblock 10 billions tweets, number of tweets per day.
\newblock \url{http://www.dsayce.com/social-media/10-billions-tweets/}.
\newblock Accessed: 2016-10-18.

\bibitem{Soman:2010vm}
J.~Soman, K.~Kothapalli, and P.~J. Narayanan.
\newblock {A fast GPU algorithm for graph connectivity.}
\newblock {\em IPDPS Workshops}, 2010.

\bibitem{stonebraker20058}
M.~Stonebraker, U.~{\c{C}}etintemel, and S.~Zdonik.
\newblock The 8 requirements of real-time stream processing.
\newblock {\em ACM SIGMOD Record}, 34(4):42--47, 2005.

\bibitem{stratton2012optimization}
J.~A. Stratton, N.~Anssari, C.~Rodrigues, I.-J. Sung, N.~Obeid, L.~Chang, G.~D.
  Liu, and W.-m. Hwu.
\newblock Optimization and architecture effects on gpu computing workload
  performance.
\newblock In {\em InPar}, pages 1--10, 2012.

\bibitem{Tang:2016:GSS}
N.~Tang, Q.~Chen, and P.~Mitra.
\newblock Graph stream summarization: From big bang to big crunch.
\newblock In {\em SIGMOD}, pages 1481--1496, 2016.

\bibitem{Toshniwal:2014:STO}
A.~Toshniwal, S.~Taneja, A.~Shukla, K.~Ramasamy, J.~M. Patel, S.~Kulkarni,
  J.~Jackson, K.~Gade, M.~Fu, J.~Donham, N.~Bhagat, S.~Mittal, and D.~Ryaboy.
\newblock Storm@twitter.
\newblock In {\em SIGMOD}, pages 147--156, 2014.

\bibitem{Tsourakakis:2009:CTM}
C.~E. Tsourakakis, U.~Kang, G.~L. Miller, and C.~Faloutsos.
\newblock {DOULION:} counting triangles in massive graphs with a coin.
\newblock In {\em SIGKDD}, pages 837--846, 2009.

\bibitem{Verner:2012:SPR}
U.~Verner, A.~Schuster, M.~Silberstein, and A.~Mendelson.
\newblock Scheduling processing of real-time data streams on heterogeneous
  multi-gpu systems.
\newblock In {\em SYSTOR}, page~7, 2012.

\bibitem{Wang:2015:GHG}
Y.~Wang, A.~Davidson, Y.~Pan, Y.~Wu, A.~Riffel, and J.~D. Owens.
\newblock Gunrock: A high-performance graph processing library on the gpu.
\newblock {\em SIGPLAN Not.}, 50(8):265--266, 2015.

\bibitem{ywang}
Y.~Wang, Q.~Fan, Y.~Li, and K.-L. Tan.
\newblock Real-time influence maximization on dynamic social streams.
\newblock In {\em Proc. VLDB Endow.}, 2017.

\bibitem{yan2014yaspmv}
S.~Yan, C.~Li, Y.~Zhang, and H.~Zhou.
\newblock yaspmv: yet another spmv framework on gpus.
\newblock In {\em SIGPLAN Notices}, volume~49, pages 107--118, 2014.

\bibitem{Yang:2011:FSM}
X.~Yang, S.~Parthasarathy, and P.~Sadayappan.
\newblock Fast sparse matrix-vector multiplication on gpus: Implications for
  graph mining.
\newblock {\em Proc. VLDB Endow.}, 4(4):231--242, 2011.

\bibitem{yang2011fast}
X.~Yang, S.~Parthasarathy, and P.~Sadayappan.
\newblock Fast sparse matrix-vector multiplication on gpus: implications for
  graph mining.
\newblock {\em Proc. VLDB Endow.}, 4(4):231--242, 2011.

\bibitem{yang2016tracking}
Y.~Yang, Z.~Wang, J.~Pei, and E.~Chen.
\newblock Tracking influential nodes in dynamic networks.
\newblock {\em arXiv preprint arXiv:1602.04490}, 2016.

\bibitem{Zaharia:2013:DSF}
M.~Zaharia, T.~Das, H.~Li, T.~Hunter, S.~Shenker, and I.~Stoica.
\newblock Discretized streams: Fault-tolerant streaming computation at scale.
\newblock In {\em SOSP}, pages 423--438, 2013.

\bibitem{zhang2016approximate}
H.~Zhang, P.~Lofgren, and A.~Goel.
\newblock Approximate personalized pagerank on dynamic graphs.
\newblock {\em arXiv preprint arXiv:1603.07796}, 2016.

\bibitem{Zhang:2011:GStream}
Y.~Zhang and F.~Mueller.
\newblock Gstream: {A} general-purpose data streaming framework on {GPU}
  clusters.
\newblock In {\em ICPP}, pages 245--254, 2011.

\bibitem{Zhong:2014:SGP}
J.~Zhong and B.~He.
\newblock Medusa: Simplified graph processing on gpus.
\newblock {\em {IEEE} Trans. Parallel Distrib. Syst.}, 25(6):1543--1552, 2014.

\end{thebibliography}

\end{document}